\newcommand{\ignore}[1]{}
\newcommand{\nop}[1]{}
\newcommand{\eat}[1]{}
\newcommand{\kw}[1]{{\ensuremath{\mathsf{#1}}}\xspace}
\newcommand{\stitle}[1]{\vspace{1ex} \noindent{\bf #1}}
\long\def\comment#1{}
\newcommand{\bd}{\kw{BD}}
\newcommand{\bindex}{\kw{BD\textrm{-}\xspace Index}}
\newcommand{\er}{\kw{ER}}
\newcommand{\eb}{\kw{EB}}
\newcommand{\pr}{\kw{PR}}
\newcommand{\gnn}{\kw{GNN}}
\newcommand{\dfs}{\kw{DFS}}
\newcommand{\snap}{\kw{SNAP}}
\newcommand{\dimacs}{\kw{DIMACS}}
\newcommand{\randomprojection}{\kw{RP}}
\newcommand{\push}{\kw{Push}}
\newcommand{\stw}{\kw{STW}}
\newcommand{\swf}{\kw{SWF}}
\newcommand{\pushp}{\kw{Push\textrm{+}}}
\newcommand{\buildtree}{\kw{BuildHierarchyTree}}
\newcommand{\rooth}{\kw{root}}
\newcommand{\anc}{\kw{Anc}}
\newcommand{\desc}{\kw{Desc}}
\newcommand{\sota}{\kw{SOTA}}
\newcommand{\fullusa}{\kw{Full\textrm{-}USA}}
\newcommand{\lapsolver}{\kw{LapSolver}}
\begin{document}
\title{BD-Index: Scalable Biharmonic Distance Queries on Large Graphs via Divide-and-Conquer Indexing}
\author{Yueyang Pan}
\affiliation{%
  \institution{Beijing Institute of Technology}
  \streetaddress{5 South Zhongguancun Street, Haidian District}
  \city{Beijing}
  \country{China}
  \postcode{100081}
}
\email{yypan@bit.edu.cn}

\author{Meihao Liao}
\affiliation{%
  \institution{Beijing Institute of Technology}
  \streetaddress{5 South Zhongguancun Street, Haidian District}
  \city{Beijing}
  \country{China}
  \postcode{100081}
}
\email{mhliao@bit.edu.cn}

\author{Rong-Hua Li}
\affiliation{%
  \institution{Beijing Institute of Technology}
  \streetaddress{5 South Zhongguancun Street, Haidian District}
  \city{Beijing}
  \country{China}
  \postcode{100081}
}
\email{lironghuabit@126.com}

\begin{abstract}
Biharmonic distance (\bd) is a powerful graph distance metric with many applications, including identifying critical links in road networks and mitigating over-squashing problem in \gnn.
However, computing \bd\ is extremely difficult, especially on large graphs.
In this paper, we focus on the problem of \emph{single-pair} \bd\ query. Existing methods mainly rely on random walk-based approaches, which work well on some graphs but become inefficient when the random walk cannot mix rapidly.
To overcome this issue, we first show that the biharmonic distance between two nodes $s,t$, denoted by $b(s,t)$, can be interpreted as the distance between two random walk distributions starting from $s$ and $t$. To estimate these distributions, the required random walk length is large when the underlying graph can be easily cut into smaller pieces. Inspired by this observation, we present novel formulas of \bd to represent $b(s,t)$ by independent random walks within two node sets $\mathcal{V}_s$, $\mathcal{V}_t$ separated by a small \emph{cut set} $\mathcal{V}_{cut}$, where $\mathcal{V}_s\cup\mathcal{V}_t\cup\mathcal{V}_{cut}=\mathcal{V}$ is the set of graph nodes. Building upon this idea, we propose \bindex, a novel index structure which follows a divide-and-conquer strategy. The graph is first cut into pieces so that each part can be processed easily. Then, all the required random walk probabilities can be deterministically computed in a bottom-top manner. When a query comes, only a small part of the index needs to be accessed. We prove that \bindex\ requires $O(n\cdot h)$ space, can be built in $O(n\cdot h\cdot (h+d_{max}))$ time, and answers each query in $O(n\cdot h)$ time, where $h$ is the height of a hierarchy partition tree and $d_{max}$ is the maximum degree, which are both usually much smaller than $n$. A striking feature of \bindex is that it is a \emph{theoretically exact} method, in contrast to existing random-walk-based approaches that only provide approximate estimates of \bd. Extensive experiments on 10 large datasets demonstrate that \bindex outperforms state-of-the-art (\sota) exact methods by at least 2 orders of magnitude in speed. It is even an order of magnitude faster than \sota approximate methods. For example, on a large road network \textsf{Road-CA} with 1,971,281  nodes and 2,766,607 edges, \bindex consumes 2 seconds while its exact (approximate) competitors takes more than 2,600 (80) seconds, with a reasonably 9.3 GB index size.  Furthermore, we also conduct two case studies to confirm the effectiveness of \bd\ in real data-mining tasks.
\end{abstract}

\maketitle

\section{Introduction}\label{sec:intro}
Let $\mathcal{G}=(\mathcal{V},\mathcal{E})$ be an undirected and connected graph with $n=|\mathcal{V}|$ nodes and $m=|\mathcal{E}|$ edges. Let $\mathbf{A}\in\mathbb{R}^{n\times n}$ be its adjacency matrix and $\mathbf{D}=\mathrm{diag}(d_1,\dots,d_n)$ be the degree matrix (a diagonal matrix) where $d_i=\sum_j \mathbf{A}_{ij}$.
The graph Laplacian is defined as $\mathbf{L}=\mathbf{D}-\mathbf{A}$.
The \emph{biharmonic distance} (\bd) between nodes $s$ and $t$ is defined as
\[
b(s,t)=(\mathbf{e}_s-\mathbf{e}_t)^{\top}\mathbf{L}^{2\dagger }(\mathbf{e}_s-\mathbf{e}_t),
\]
where $\mathbf{L}^\dagger$ denotes the Moore–Penrose pseudoinverse of $\mathbf{L}$, $\mathbf{L}^{2\dagger}$ represents the square of $\mathbf{L}^{\dagger}$, and $\mathbf{e}_i$ is the $i$-th standard basis vector~\cite{TOG2010Lipman}.
The \emph{single-pair} \bd\ query problem is: given two nodes $s,t\in V$, compute $b(s,t)$ efficiently without computing $\mathbf{L}^{\dagger}$ or all-pairs distances.

The \bd\ metric has been applied in many areas of data management and network analysis~\cite{TOG2010Lipman,IJCAI2018Yi,ACC2018Yi,TIT2022Yi,ICML2023Black,Black2024,LiuKDD2024}.
Two representative applications are (1) identifying critical link identification on road networks~\cite{IJCAI2018Yi}, and (2) over-squashing mitigation in \gnn~\cite{ICML2023Black}.
Both applications require \emph{many fast single-pair queries} rather than global computation.
For example, in \gnn\ rewiring, one needs to repeatedly evaluate \bd\ between many node pairs to add or remove edges adaptively; in road networks, detecting critical links requires repeated distance evaluation between selected intersections or regions.
Thus, the main challenge is to answer numerous single-pair \bd\ queries accurately and efficiently on large graphs.

\bd can be computed by applying exact Laplacian solvers such as Cholesky decomposition solver \cite{HornJohnson2013MatrixAnalysis} and other Laplacian solvers \cite{Cohen2014SolvingSDD, Gao2023RobustLaplacian, KyngSachdeva2016, rchol} with high precision. The most advanced Laplacian solver applies approximate Gaussian elimination to build pre-conditioners for the PCG (preconditioned conjugate gradient) routine \cite{Gao2023RobustLaplacian}, which requires $\widetilde{O}(m)$ time. However, as the hidden factor is large, it still requires high time and memory cost. To further enhance efficiency, existing approximate methods for computing \bd\ can be divided into two categories.
The first group, \emph{Laplacian solver-based methods}~\cite{TOG2010Lipman}, use random projections by iteratively solving a small number of Laplacian solvers.
The second group, \emph{random walk-based methods}~\cite{LiuKDD2024}, approximate \bd\ via sampling random walks between nodes.
These methods are lightweight and perform well on many graphs, but their efficiency strongly depends on the walk length~$l$, which must be sufficiently large to achieve small approximation error.
In practice, $l$ can be very large; for instance, on the \textsc{Amazon} dataset, $l=10^{6}$ is needed to reach relative error of $10^{-4}$, and the average query time exceeds $10^{2}$ seconds.
Therefore, while random walk approaches are effective in some cases, they can be extremely slow or inaccurate on graphs where long walks dominate.

To overcome these limitations, we present several new insights into the structure of \bd.
We show that \bd\ can be interpreted as the distance between two random walk distributions starting from nodes $s$ and $t$.
We further discover that the regions corresponding to long random walks are often easy to cut, meaning that the random walk can be restricted within two large subsets that are divided by a small cut set. This leads to a divide-and-conquer formulation: we can cut the graph into smaller pieces, compute local random walk probabilities within each piece, and then combine them hierarchically.
These observations inspire our new index structure, named \bindex, which deterministically stores intermediate random walk probabilities hierarchically.

\bindex\ follows a bottom-up construction.
We first cut the graph into subgraphs along small vertex cuts and recursively compute local random walk quantities inside each piece.
Each piece contributes partial statistics that are merged upward until the entire graph is covered.
During query time, only the relevant parts of the index along the cut paths of $s$ and $t$ need to be accessed.
Theoretical analysis shows that the index size is $O(n\cdot h)$, it can be constructed in $O(n\cdot h\cdot (h+d_{max}))$ time, each query takes $O(n\cdot h)$ time, where $h$ is the height of the hierarchy tree $\mathcal{H}$, typically much smaller than $n$ (as confirmed in our experiments). It is important to notice that although \bd is represented in terms of random walks, the proposed \bindex is an exact method, as all random walk probabilities are computed deterministically without sampling, and the only negligible error comes from floating-point precision.

We conduct extensive experiments on 10 real-world datasets.
On the \textsc{Amazon} graph, \bindex\ achieves an order-of-magnitude speedup over approximate methods with a moderate 21 GB index.
On road networks, \bindex\ performs even better—for example, on the \textsc{NewYork} graph, it requires only 0.35 GB of index space while being two orders of magnitude faster over the fastest approximate method.
Meanwhile, \bindex\ is theoretically exact. In addition, two case studies demonstrate the practical value of \bd\ in real applications—one in identifying critical links in urban transportation networks, and another in improving \gnn\ performance through \bd-guided rewiring.

\section{Preliminaries}

\begin{figure}[t]
  \centering
  \makebox[\linewidth][c]{%
    \begin{minipage}[t]{0.3\linewidth}\centering
      \includegraphics[width=\linewidth]{./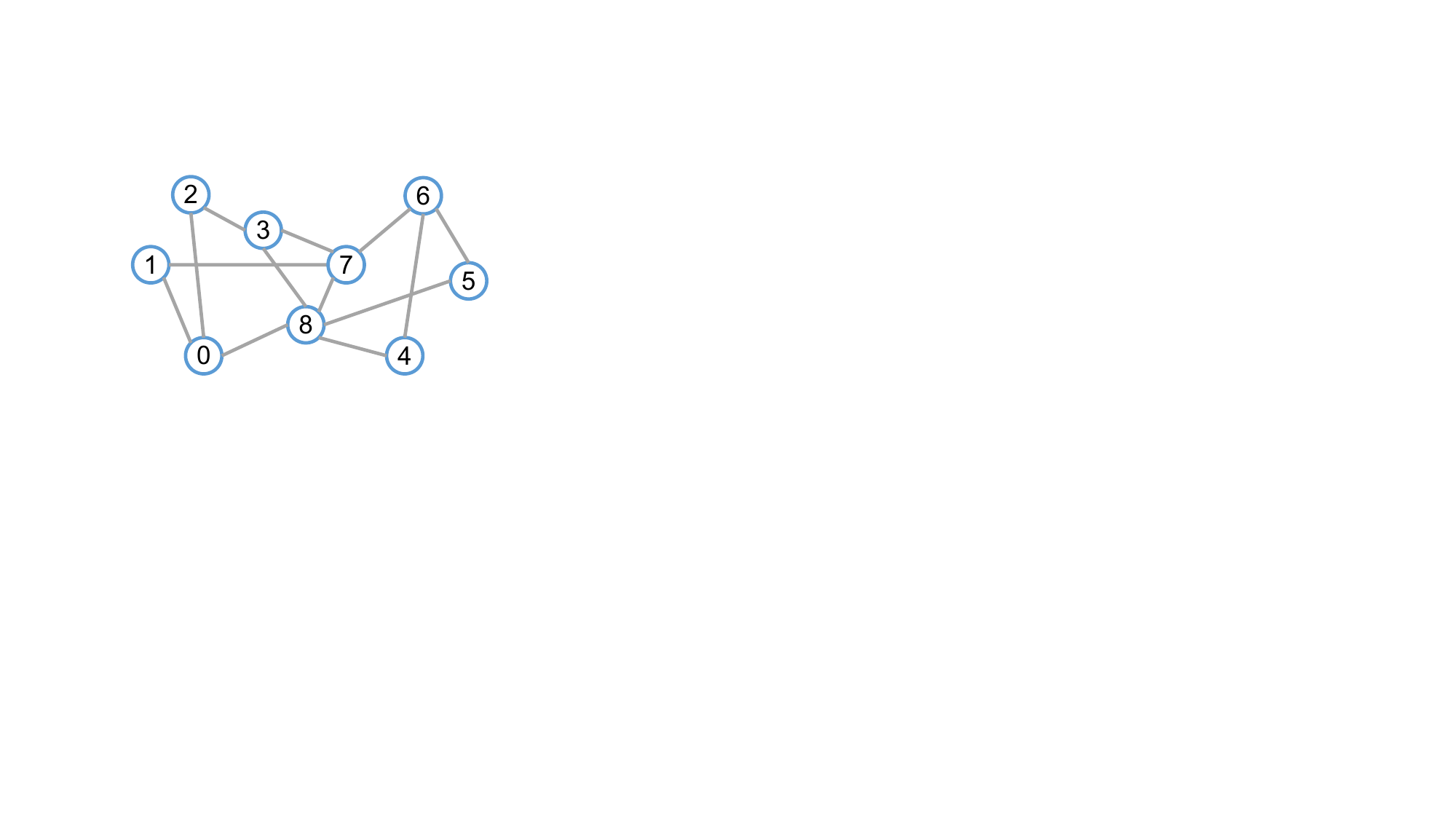}\\[-0.35em]
      \small (a) Graph $\mathcal{G}$
    \end{minipage}%
    \hspace{0.04\linewidth}
    \begin{minipage}[t]{0.66\linewidth}\centering
      \includegraphics[width=\linewidth]{./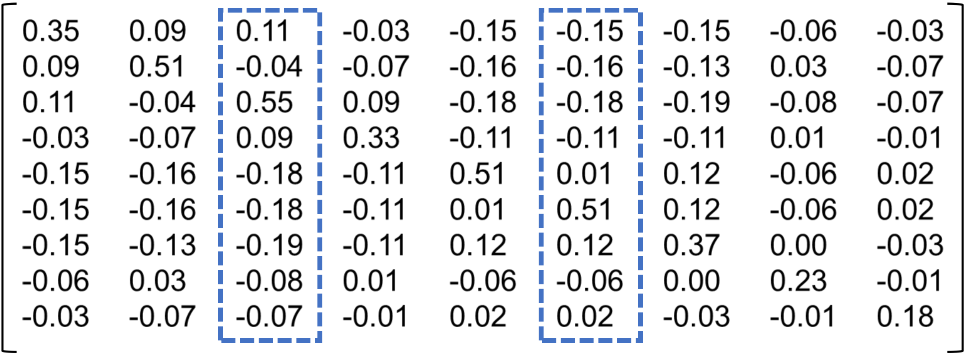}\\[-0.35em]
      \small (b) Moore--Penrose pseudoinverse of $\mathbf{L}$
    \end{minipage}%
  }

  \caption{Graph $\mathcal{G}$ and its Moore--Penrose pseudoinverse $\mathbf{L}^\dagger$. For example,
$b(2,5)
= \bigl\|L^{\dagger}\mathbf{e}_2 - L^{\dagger}\mathbf{e}_5\bigr\|_2^2
= 1.28$.
}
  \label{fig:g-and-lpinv}
\end{figure}

Let $\mathcal{G}=(\mathcal{V},\mathcal{E})$ be an undirected and connected graph with $n=|\mathcal{V}|$ nodes and $m=|\mathcal{E}|$ edges.  
The adjacency matrix is $\mathbf{A}\in\mathbb{R}^{n\times n}$, where $A_{ij}=1$ if $(i,j)\in \mathcal{E}$ and $0$ otherwise.  
The degree matrix is $\mathbf{D}=\mathrm{diag}(d_1,\dots,d_n)$ with $d_i=\sum_jA_{ij}$.  
The combinatorial Laplacian is $\mathbf{L}=\mathbf{D}-\mathbf{A}$, and $\mathbf{L}^{\dagger}$ denotes its Moore–Penrose pseudoinverse.  
The vector $\mathbf{e}_i$ represents the $i$-th standard basis vector. The biharmonic distance between nodes $s,t\in V$, denoted by $b(s,t)$, is defined by:
\[
b(s,t)=
(\mathbf{e}_s-\mathbf{e}_t)^{\!\top}\mathbf{L}^{2\dagger }(\mathbf{e}_s-\mathbf{e}_t),
\]
where $\mathbf{L}^{2\dagger}$ denotes the square of $\mathbf{L}^{\dagger}$. Computing $b(s,t)$ exactly requires access to $\mathbf{L}^{\dagger}$, which is very expensive for large graphs.

\stitle{Random walk.}  
A simple random walk on $G$ starts from a node and at each step moves to one of its neighbors chosen with equal probability.  
The transition matrix is  
$\mathbf{P}=\mathbf{D}^{-1}\mathbf{A},
P_{ij}=\frac{A_{ij}}{d_i}.$
Then $(\mathbf{P}^k)_{ij}$ is the probability that a walk beginning at node $i$ is at node $j$ after $k$ steps. The mixing time is the number of steps required for the walk’s distribution to become close to the stationary distribution. For a walk starting from node $s$, the expected number of times it visits node $t$ is  
$
\tau_{s,t}=\sum_{k=0}^{\infty}(\mathbf{P}^{k})_{st}.
$
We also define the \emph{degree-normalized expected visit count}
$\tilde{\tau}_{s,t}=\frac{\tau_{s,t}}{d_t}$,
which divides by the degree of $t$.
A \emph{$v$-absorbed random walk} is a random walk that starts from a node and stops once it first reaches node~$v$. 
For such a walk from $s$, we define the expected number of visits to node~$t$ as 
$\tau^{(v)}_{s,t}$,
and its degree-normalized counterpart as 
$\tilde{\tau}^{(v)}_{s,t} = \frac{\tau^{(v)}_{s,t}}{d_t}$.
Previous work~\cite{Landmark2023Liao} has shown that if $\mathbf{L}_v$ denotes the principal submatrix of the graph Laplacian~$\mathbf{L}$ obtained by removing 
the row and column corresponding to node~$v$, then
$\tilde{\tau}^{(v)}_{s,t} = \big(\mathbf{L}_{v}^{-1}\big)_{s,t}$.

\stitle{Single-pair query problem.}  
Given nodes $s,t\in \mathcal{V}$, a \emph{single-pair query} returns $b(s,t)$.  
In many applications such as identifying critical links in road networks~\cite{IJCAI2018Yi} or over-squashing mitigation in \gnn~\cite{ICML2023Black}, many such queries must be answered for different pairs.  
The goal is to build an index that enables efficient queries of \bd.

\subsection{Existing methods and their defects}
Exact methods for computing \bd formulate $b(s,t)$ as solving a linear system
$\mathbf{L}^{2}\mathbf{x}=\mathbf{e}_s-\mathbf{e}_t$, $b(s,t)=(\mathbf{e}_s-\mathbf{e}_t)^{\!\top}\mathbf{x}$.
A straightforward approach is to invoke a Cholesky factorization~\cite{HornJohnson2013MatrixAnalysis}, or other high-precision Laplacian solvers~\cite{Cohen2014SolvingSDD,Gao2023RobustLaplacian,KyngSachdeva2016,rchol}. The state-of-the-art solvers construct a sequence of preconditioners via approximate Gaussian elimination and use these inside a preconditioned conjugate gradient (PCG) routine~\cite{Gao2023RobustLaplacian}. This yields a nearly-linear $\widetilde{O}(m)$ running time in theory.
However, the hidden constants are large: constructing and storing multi-level preconditioners is memory-intensive and each solve still requires many PCG iterations. Consequently, these solvers remain costly on large graphs and are impractical for workloads with many single-pair BD queries, each requiring a separate linear solve.

Existing approximate methods can be grouped into two main categories: (i) \emph{Laplacian-solver–based methods.} Yi et al.~\cite{IJCAI2018Yi} use \emph{random projection}, which projects vectors into a smaller subspace to estimate $\mathbf{L}^{2\dagger }(\mathbf{e}_s-\mathbf{e}_t)$.  
Let the projection dimension be $r$,  
the total complexity becomes $O(mr+n r^{2})$ time and $O(nr)$ space.  
However, to obtain small error, $r$ must still be large, so the method is not efficient on very large graphs. (ii) \emph{Random walk–based methods}~\cite{LiuKDD2024} reformulated $b(s,t)$ in terms of random walk expectations with the transition matrix $\mathbf{P}$.  
They proposed several algorithms—\push, \stw, and the improved \swf—that compute partial walk probabilities up to a maximum length $l$.  
All of them approximate $b(s,t)$ by combining the contributions from walks of length at most $l$.  
The total time complexity of \swf is  
$O\!\left(\frac{2}{\epsilon^{2}(\min d)^{4}}\,l^{5}\right)$,
where $\epsilon$ is the target relative error and $\min d$ is the minimum node degree. When the graph has long paths or is very sparse, a large $l$ is required, which makes the query slow and less accurate.  

To overcome these problems, we first deeply investigate the random walk interpretation of \bd, showing that random walk-based approaches perform poorly when the underlying graph is easily separable. Thus, we present several new interpretations of \bd in Section~\ref{sec:new-result} to represent \bd in terms of random walks limit in two independent sets separated by a small cut set. Then, we propose a novel index-based approach \bindex in Section~\ref{sec:bd-index} to efficiently answer single-pair queries over the whole graph.

\section{New theoretical results}\label{sec:new-result}
In this section, we present several new theoretical results on \bd. We first show that previous \emph{random walk–based methods} implicitly provide a random walk interpretation of \bd: the \bd\ between $s$ and $t$ can be viewed as the distance between two random walk distributions, where more similar distributions yield smaller \bd\ values. Next, we derive a new formula, showing that \bd\  can also be explained by random walks from $s$ and $t$ to any node $v$. When $v$ separates the graph, we can restrict the walks to their respective subgraphs. Finally, we extend this to the case where the cut set $\mathcal{V}_{cut}$ contains multiple nodes: the walks can still be restricted to the two subgraphs, while the extra information can be stored with very little space. Our overall idea is illustrated in Figure~\ref{fig:bd-illustration}.

\subsection{Interpreting \bd in terms of random walk}

\begin{lemma}[Global Random Walk Representation of \bd ]
\label{lem:full-walk}
Let $\tilde{\boldsymbol{\tau}}^{(\infty)}_{s}$ denote the degree-normalized distribution of the expected visit counts for an infinite random walk starting from node $s$. Then, the biharmonic distance satisfies 
\begin{equation}\label{eq:bd-full}
b(s,t)\;=\;\|\tilde{\boldsymbol{\tau}}^{(\infty)}_{s}-\tilde{\boldsymbol{\tau}}^{(\infty)}_{t}\|_2^2 \;-\; \frac{1}{n}\,\Big(\mathbf 1^{\top}(\tilde{\boldsymbol{\tau}}^{(\infty)}_{s}-\tilde{\boldsymbol{\tau}}^{(\infty)}_{t}) \Big)^2.
\end{equation}
\end{lemma}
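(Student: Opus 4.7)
The plan is to reduce the claim to a random-walk representation of $\mathbf{L}^\dagger$ and then read off~\eqref{eq:bd-full} by an orthogonal projection onto $\mathbf{1}^\perp$. First, since $\mathbf{L}^\dagger$ is symmetric and $\mathbf{L}^{2\dagger}=(\mathbf{L}^\dagger)^2$, the definition of $b(s,t)$ can be rewritten as
\[
b(s,t) \;=\; (\mathbf{e}_s-\mathbf{e}_t)^\top (\mathbf{L}^\dagger)^2 (\mathbf{e}_s-\mathbf{e}_t) \;=\; \|\mathbf{L}^\dagger(\mathbf{e}_s-\mathbf{e}_t)\|_2^2,
\]
so it suffices to relate $\tilde{\boldsymbol{\tau}}^{(\infty)}_s - \tilde{\boldsymbol{\tau}}^{(\infty)}_t$ to $\mathbf{L}^\dagger(\mathbf{e}_s-\mathbf{e}_t)$.

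Second, I would establish the entrywise identity
\[
(\mathbf{L}^\dagger)_{s,v} - (\mathbf{L}^\dagger)_{t,v} \;=\; \sum_{k=0}^{\infty} \frac{(\mathbf{P}^k)_{s,v} - (\mathbf{P}^k)_{t,v}}{d_v}.
\]
The cleanest route is via the normalized Laplacian $\mathcal{L} = I - \mathbf{S}$ with $\mathbf{S} = \mathbf{D}^{-1/2}\mathbf{A}\mathbf{D}^{-1/2}$, whose kernel is spanned by $\phi_1 = \mathbf{D}^{1/2}\mathbf{1}/\sqrt{2m}$. A spectral argument (using that $\mathbf{S}$ is strictly contractive on $\phi_1^\perp$) gives $\mathcal{L}^\dagger = \sum_{k\ge 0}(\mathbf{S}^k - \phi_1\phi_1^\top)$; converting back through $\mathbf{L}^\dagger = \mathbf{D}^{-1/2}\mathcal{L}^\dagger \mathbf{D}^{-1/2}$ together with $(\mathbf{P}^k)_{s,v}/d_v = (\mathbf{S}^k)_{s,v}/\sqrt{d_s d_v}$ yields $(\mathbf{L}^\dagger)_{s,v} = \sum_{k\ge 0}\big[(\mathbf{P}^k)_{s,v}/d_v - 1/(2m)\big]$. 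Subtracting the analogous expression at $t$ cancels the $s$-independent constant $1/(2m)$ per step and produces the displayed identity, whose right-hand side is by definition $(\tilde{\boldsymbol{\tau}}^{(\infty)}_s - \tilde{\boldsymbol{\tau}}^{(\infty)}_t)_v$. Thus $\mathbf{L}^\dagger(\mathbf{e}_s-\mathbf{e}_t)$ and $\tilde{\boldsymbol{\tau}}^{(\infty)}_s - \tilde{\boldsymbol{\tau}}^{(\infty)}_t$ agree up to an additive multiple of $\mathbf{1}$ coming from the ambiguity in how each individually divergent $\tilde{\boldsymbol{\tau}}^{(\infty)}_s$ is regularized.

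Finally, because $\ker\mathbf{L} = \mathrm{span}(\mathbf{1})$ on a connected graph, $\mathbf{L}^\dagger(\mathbf{e}_s-\mathbf{e}_t) \in \mathbf{1}^\perp$, so the orthogonal projector $I - \tfrac{1}{n}\mathbf{1}\mathbf{1}^\top$ preserves it while killing the $\mathbf{1}$-component of $u := \tilde{\boldsymbol{\tau}}^{(\infty)}_s - \tilde{\boldsymbol{\tau}}^{(\infty)}_t$. The Pythagorean identity for this projection then gives
\[
b(s,t) \;=\; \Big\|\big(I - \tfrac{1}{n}\mathbf{1}\mathbf{1}^\top\big) u\Big\|_2^2 \;=\; \|u\|_2^2 \;-\; \tfrac{1}{n}\big(\mathbf{1}^\top u\big)^2,
\]
which is exactly~\eqref{eq:bd-full}. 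The main obstacle is the second step: each entry of the raw visit-count series $\sum_{k}(\mathbf{P}^k)_{s,v}/d_v$ diverges like $\sum_k 1/(2m)$, so identifying $\tilde{\boldsymbol{\tau}}^{(\infty)}_s$ with $\mathbf{L}^\dagger\mathbf{e}_s$ requires a careful regularization. I would handle this either via the spectral argument above or, equivalently, by analyzing an $\alpha$-absorbed walk $\sum_{k}(1-\alpha)^k (\mathbf{P}^k)_{s,v}/d_v$ and letting $\alpha \to 0^+$, noting that the $\alpha^{-1}/(2m)$ divergence is $s$-independent and cancels in the difference. The remaining steps are purely algebraic.
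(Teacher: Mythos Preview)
Your approach differs from the paper's in a meaningful way. The paper spends most of its proof deriving the elementary identity $\tau_{s,t}^{(\ell)}=\sum_{j=0}^{\ell}(\mathbf{P}^j)_{s,t}$ via path-counting, and then simply \emph{cites} the formula $b(s,t)=\|\sum_{i\ge 0}(\mathbf{e}_s-\mathbf{e}_t)^{\top}\mathbf{P}^i\mathbf{D}^{-1}\|_2^2 \pm \tfrac{1}{n}(\cdots)^2$ from earlier work~\cite{LiuKDD2024}, substituting $\tilde{\boldsymbol{\tau}}_s=\sum_{i\ge 0}\mathbf{e}_s\mathbf{P}^i\mathbf{D}^{-1}$ to conclude. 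Your spectral route through $\mathcal{L}$ is self-contained and explains \emph{why} the $-\tfrac{1}{n}(\mathbf{1}^\top u)^2$ term appears (as removal of the $\mathbf{1}$-component via orthogonal projection), whereas the paper treats the key identity as a black box. You also flag the divergence of each individual series $\sum_k(\mathbf{P}^k)_{s,v}/d_v$, which the paper's proof passes over entirely.

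One technical slip to fix: the identity $\mathbf{L}^\dagger=\mathbf{D}^{-1/2}\mathcal{L}^\dagger\mathbf{D}^{-1/2}$ is \emph{false} for non-regular graphs (the Moore--Penrose pseudoinverse does not commute with non-orthogonal conjugation; try the $3$-node path). What is true is that $\mathbf{M}:=\mathbf{D}^{-1/2}\mathcal{L}^\dagger\mathbf{D}^{-1/2}$ is a \emph{generalized} inverse, satisfying $\mathbf{L}\mathbf{M}\mathbf{L}=\mathbf{L}$, so for any zero-sum $\mathbf{x}$ one has $\mathbf{M}\mathbf{x}-\mathbf{L}^\dagger\mathbf{x}\in\ker\mathbf{L}=\mathrm{span}(\mathbf{1})$. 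Hence the $\mathbf{1}$-ambiguity you invoke is genuine but comes from this discrepancy, not only from regularization of the divergent series; fortunately your final projection step absorbs it either way, so the conclusion is unaffected. A smaller caveat: the series $\sum_{k\ge 0}(\mathbf{S}^k-\phi_1\phi_1^\top)$ requires the graph to be non-bipartite (otherwise $-1\in\mathrm{spec}(\mathbf{S})$); the $\alpha\to 0^+$ regularization you sketch handles that case cleanly.
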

\begin{proof}
For any two nodes $s$ and $t$, we define $\tau_{s,t}^{(\ell)}$ as the expected
number of visits to $t$ made by a length-$\ell$ random walk starting from $s$.
Formally, let
$\mathrm{path}_\ell = (v_0, v_1, \dots, v_\ell)$
denote a walk of length $\ell$ and write
$\Pr(\mathrm{path}_\ell) = \prod_{i=0}^{\ell-1} \mathbf{P}_{v_i,v_{i+1}},$
with the convention $\Pr(\mathrm{path}_0)=1$ when $\ell=0$.
We use $[\cdot]$ for the Iverson bracket, i.e., $[{\rm cond}]=1$ if the
condition holds and $0$ otherwise.
Then
\begin{align*}
\tau_{s,t}^{(\ell)}
  &= \mathbb{E}\Bigl[\sum_{j=0}^{\ell} [v_j = t] \,\Bigm|\, v_0 = s\Bigr] 
  = \sum_{\mathrm{path}_\ell} [v_0 = s] \Pr(\mathrm{path}_\ell)
       \sum_{j=0}^{\ell} [v_j = t] \\
  &= \sum_{\mathrm{path}_\ell} [v_0 = s]
       \Bigl(\prod_{i=0}^{\ell-1} \mathbf{P}_{v_i,v_{i+1}}\Bigr)
       \sum_{j=0}^{\ell} [v_j = t] \\
  &= \sum_{j=0}^{\ell} \sum_{\mathrm{path}_\ell}
       [v_0 = s][v_j = t]
       \prod_{i=0}^{\ell-1} \mathbf{P}_{v_i,v_{i+1}} \\
  &= \sum_{j=0}^{\ell} \sum_{\mathrm{path}_\ell}
       [v_0 = s][v_j = t]
       \Bigl(\prod_{i=0}^{j-1} \mathbf{P}_{v_i,v_{i+1}}\Bigr)
       \Bigl(\prod_{i=j}^{\ell-1} \mathbf{P}_{v_i,v_{i+1}}\Bigr) \\
  &= \sum_{j=0}^{\ell}
       \sum_{\mathrm{path}_j} [v_0 = s][v_j = t] \Pr(\mathrm{path}_j)
       \sum_{\mathrm{path}_{\ell-j}} [v'_0 = t] \Pr(\mathrm{path}_{\ell-j}) \\
  &= \sum_{j=0}^{\ell} \mathbf{P}^j_{s,t}
       \sum_{\mathrm{path}_{\ell-j}} [v'_0 = t] \Pr(\mathrm{path}_{\ell-j}) \\
  &= \sum_{j=0}^{\ell} \mathbf{P}^j_{s,t}
       \sum_{k=1}^{n} \mathbf{P}^{\,\ell-j}_{t,k} 
  = \sum_{j=0}^{\ell} \mathbf{P}^j_{s,t},
\end{align*}
where $v'_0$ denotes the starting node of the suffix walk
$\mathrm{path}_{\ell-j}$ and we used that each row of
$\mathbf{P}^{\,\ell-j}$ sums to $1$.
Hence, whenever the series converges, the expected number of visits to $t$
along an infinite-length walk from $s$ is
$\tau_{s,t}^{(\infty)} := \sum_{j=0}^{\infty} \mathbf{P}^j_{s,t}.$
It is convenient to collect these expectations into a degree-normalized
visit-count vector
$\tilde{\boldsymbol{\tau}}_s
  := \sum_{i=0}^{\infty} \mathbf{e}_s \mathbf{P}^i \mathbf{D}^{-1},$
where $\mathbf{e}_s$ is the one-hot row vector for node $s$; the
$t$-th entry of $\tilde{\boldsymbol{\tau}}_s$ equals
$\tau_{s,t}^{(\infty)}/d_t$.
Previous work~\cite{LiuKDD2024} has proved that
\[
b(s,t)
  = \left\|
      \sum_{i=0}^{\infty} (\mathbf{e}_{s}-\mathbf{e}_{t})^{\top} \mathbf{P}^i \mathbf{D}^{-1}
    \right\|_2^2
    + \frac{1}{n}
      \left\|
        \sum_{i=0}^{\infty} (\mathbf{e}_{s}-\mathbf{e}_{t})^{\top} \mathbf{P}^i \mathbf{D}^{-1}\mathbf{1}
      \right\|_2^2.
\]
By substituting
$\tilde{\boldsymbol{\tau}}_s
  = \sum_{i=0}^{\infty} \mathbf{e}_s \mathbf{P}^i \mathbf{D}^{-1}$
into the above expression, we obtain the desired result, which completes the proof.
\end{proof}
Lemma~\ref{lem:full-walk} expresses \bd\ as the difference between two infinite random walk distributions.
However, in some parts of the graph, random walks mix slowly. According to the Cheeger inequality~\cite{Chung1997}, slow mixing implies a small spectral gap, which reflects the existence of a small cut. Such regions are structurally easy to separate, as random walks tend to remain within them for a long time. Therefore, we aim to cut the graph at these places.
To this end, we introduce $v$-absorbed random walk, which terminates when it hits vertex $v$. If $v$ is a cut vertex, the random walk can then be decomposed into two independent walks within the separated subgraphs.

\begin{lemma}[$v$-absorbed Random Walk Representation of \bd]
\label{lem:v-absorbed}
For any node $v\in V$, let $\tilde{\boldsymbol{\tau}}^{(v)}_{s}$ denote the degree-normalized distribution of the expected visit counts for an $v$-absorbed random walk starting from node $s$. Then, the biharmonic distance satisfies 
\begin{align}
b(s,t) 
&= \big\| \mathbf{L}_{v}^{-1}(\mathbf{e}_s - \mathbf{e}_t) \big\|_2^{2} 
   - \frac{1}{n} \left( \mathbf{1}^{\!\top} \mathbf{L}_{v}^{-1}(\mathbf{e}_s - \mathbf{e}_t) \right)^{2} \notag\\
&= \big\| \tilde{\boldsymbol{\tau}}^{(v)}_{s} - \tilde{\boldsymbol{\tau}}^{(v)}_{t} \big\|_2^{2} 
   - \frac{1}{n} \left( \mathbf{1}^{\!\top} (\tilde{\boldsymbol{\tau}}^{(v)}_{s} - \tilde{\boldsymbol{\tau}}^{(v)}_{t}) \right)^{2}.
\label{eq:bd-tau}
\end{align}
\end{lemma}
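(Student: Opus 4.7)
The plan is to connect the pseudo-inverse $\mathbf{L}^{\dagger}$ to the principal-submatrix inverse $\mathbf{L}_{v}^{-1}$ via a standard grounding trick, and then invoke the preliminary identity $\tilde{\tau}^{(v)}_{s,t}=(\mathbf{L}_{v}^{-1})_{s,t}$ to obtain the random-walk form. Throughout I write $\mathbf{x}=\mathbf{e}_s-\mathbf{e}_t$ and assume $s,t\neq v$; the boundary cases $s=v$ or $t=v$ can be dispatched afterwards by noting that the corresponding entry of $\mathbf{x}_{-v}$ simply drops out.

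First, since $\mathbf{1}^{\top}\mathbf{x}=0$, the vector $\mathbf{L}^{\dagger}\mathbf{x}$ is the unique solution $\mathbf{y}$ of $\mathbf{L}\mathbf{y}=\mathbf{x}$ with $\mathbf{1}^{\top}\mathbf{y}=0$. I would then construct an explicit (non-orthogonal) solution $\hat{\mathbf{y}}\in\mathbb{R}^{n}$ by setting $\hat y_v=0$ and letting the remaining entries be $\mathbf{L}_{v}^{-1}\mathbf{x}_{-v}$, where $\mathbf{x}_{-v}$ is $\mathbf{x}$ with its $v$-th entry deleted. For every row $i\neq v$, the identity $(\mathbf{L}\hat{\mathbf{y}})_i=x_i$ is immediate from the definition of $\mathbf{L}_v$ together with $\hat y_v=0$. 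The key step is row $v$: because $\mathbf{L}\mathbf{1}=\mathbf{0}$, every vector in the column space of $\mathbf{L}$ has entries summing to zero, so $(\mathbf{L}\hat{\mathbf{y}})_v=-\sum_{i\neq v}x_i=x_v$ is forced automatically. Thus $\mathbf{L}\hat{\mathbf{y}}=\mathbf{x}$ exactly, with no extra constraint imposed by coordinate $v$.

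Because $\hat{\mathbf{y}}$ and $\mathbf{L}^{\dagger}\mathbf{x}$ both solve $\mathbf{L}\mathbf{y}=\mathbf{x}$, they differ by a scalar multiple of $\mathbf{1}$, and orthogonalizing gives
\[
\mathbf{L}^{\dagger}\mathbf{x}=\hat{\mathbf{y}}-\frac{\mathbf{1}^{\top}\hat{\mathbf{y}}}{n}\,\mathbf{1}.
\]
A direct application of the Pythagorean theorem then yields
\[
b(s,t)=\mathbf{x}^{\top}\mathbf{L}^{2\dagger}\mathbf{x}=\|\mathbf{L}^{\dagger}\mathbf{x}\|_2^{2}=\|\hat{\mathbf{y}}\|_2^{2}-\frac{(\mathbf{1}^{\top}\hat{\mathbf{y}})^{2}}{n}.
\]
Since $\hat y_v=0$, both $\|\hat{\mathbf{y}}\|_2^{2}$ and $\mathbf{1}^{\top}\hat{\mathbf{y}}$ equal the analogous quantities computed from $\mathbf{L}_{v}^{-1}(\mathbf{e}_s-\mathbf{e}_t)$ under the natural identification of coordinate sets, giving the first equality in the statement.

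For the second equality, I would substitute the preliminary identity $\tilde{\tau}^{(v)}_{s,t}=(\mathbf{L}_v^{-1})_{s,t}$, so that $\tilde{\boldsymbol{\tau}}^{(v)}_{s}=\mathbf{L}_v^{-1}\mathbf{e}_s$ (padded with zero at coordinate $v$), and read off the claim. I expect the main obstacle to be the row-$v$ cancellation in the construction of $\hat{\mathbf{y}}$: one must argue carefully that $(\mathbf{L}\hat{\mathbf{y}})_v=x_v$ is not an independent equation but is pinned down by the nullspace structure $\mathbf{L}\mathbf{1}=\mathbf{0}$. Everything else is either a standard property of the Moore–Penrose pseudoinverse on the range of $\mathbf{L}$ or a direct invocation of the absorbing-walk identity already recorded in the preliminaries.
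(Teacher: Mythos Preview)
Your proposal is correct and follows essentially the same approach as the paper: both establish that $\mathbf{L}^{\dagger}\mathbf{x}$ is the mean-centered version of the grounded solution $\hat{\mathbf{y}}$ with $\hat y_v=0$ and $\hat{\mathbf{y}}_{-v}=\mathbf{L}_v^{-1}\mathbf{x}_{-v}$, and then expand $\|\mathbf{L}^{\dagger}\mathbf{x}\|_2^{2}$. The paper asserts the identity $(\mathbf{L}^{\dagger}\mathbf{x})^{(-v)}=\mathbf{L}_v^{-1}\mathbf{x}^{(-v)}-\tfrac{1}{n}\mathbf{1}\mathbf{1}^{\top}\mathbf{L}_v^{-1}\mathbf{x}^{(-v)}$ and finishes by substituting into $\mathbf{x}^{(-v)\top}\mathbf{L}_v^{-1}(\mathbf{L}^{\dagger}\mathbf{x})^{(-v)}$, whereas your Pythagorean step is a slightly cleaner packaging of the same computation; the row-$v$ verification you flag is exactly the point the paper hides inside the phrase ``a direct calculation.''
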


\begin{proof}
For any vector \(\mathbf{x}\in\mathbb{R}^n\), write \(\mathbf{x}^{(-v)}\in\mathbb{R}^{n-1}\) for the vector obtained by deleting its \(v\)-th entry, and let \(\mathbf{1}\) denote the all-ones vector in \(\mathbb{R}^{n-1}\).  
Set \(\mathbf{x} = \mathbf{e}_s - \mathbf{e}_t\), so that \(\mathbf{1}^\top\mathbf{x} = 0\).  
A direct calculation using the fact that \(\mathbf{L}^\dagger\) is the Moore–Penrose pseudoinverse of \(\mathbf{L}\) and that \(\mathbf{L}_v\) is nonsingular shows that, for any zero-sum \(\mathbf{x}\),
\[
(\mathbf{L}^\dagger \mathbf{x})^{(-v)}
  = \mathbf{L}_v^{-1}\mathbf{x}^{(-v)}
    - \tfrac{1}{n}\,\mathbf{1}\mathbf{1}^{\!\top}\mathbf{L}_v^{-1}\mathbf{x}^{(-v)} .
\]
Using also that for any \(u\neq v\),
\(\mathbf{e}_u^{\!\top}\mathbf{L}^\dagger \mathbf{x} = \mathbf{e}_u^{\!\top}\mathbf{L}_v^{-1}\mathbf{x}^{(-v)}\), we obtain
\begin{align*}
b(s,t)
&= \mathbf{x}^{\!\top}(\mathbf{L}^\dagger)^2\mathbf{x} \\
&= \mathbf{x}^{(-v)\top}\mathbf{L}_v^{-1}(\mathbf{L}^\dagger\mathbf{x})^{(-v)} \\
&= \mathbf{x}^{(-v)\top}\mathbf{L}_v^{-1}\Bigl(\mathbf{L}_v^{-1}\mathbf{x}^{(-v)}
     - \tfrac{1}{n}\mathbf{1}\mathbf{1}^{\!\top}\mathbf{L}_v^{-1}\mathbf{x}^{(-v)}\Bigr) \\
&= \bigl\|\mathbf{L}_v^{-1}\mathbf{x}^{(-v)}\bigr\|_2^2
   - \tfrac{1}{n}\bigl(\mathbf{1}^{\!\top}\mathbf{L}_v^{-1}\mathbf{x}^{(-v)}\bigr)^2 \\
&= \bigl\|\mathbf{L}_v^{-1}(\mathbf{e}_s-\mathbf{e}_t)\bigr\|_2^2
   - \tfrac{1}{n}\bigl(\mathbf{1}^{\!\top}\mathbf{L}_v^{-1}(\mathbf{e}_s-\mathbf{e}_t)\bigr)^2 .
\end{align*}

For the random walk representation, previous work~\cite{Landmark2023Liao} shows that the degree-normalized expected visit counts of the \(v\)-absorbed random walk satisfy \(\tilde{\tau}^{(v)}_{u,w} = (\mathbf{L}_v^{-1})_{u,w}\) for all \(u,w\neq v\).  
In particular, \(\mathbf{L}_v^{-1}\mathbf{e}_s = \tilde{\boldsymbol{\tau}}^{(v)}_s\) and \(\mathbf{L}_v^{-1}\mathbf{e}_t = \tilde{\boldsymbol{\tau}}^{(v)}_t\), hence
\(\mathbf{L}_v^{-1}(\mathbf{e}_s-\mathbf{e}_t)
   = \tilde{\boldsymbol{\tau}}^{(v)}_s - \tilde{\boldsymbol{\tau}}^{(v)}_t\).  
Substituting this into the expression above gives
\[
b(s,t)
= \bigl\|\tilde{\boldsymbol{\tau}}^{(v)}_s - \tilde{\boldsymbol{\tau}}^{(v)}_t\bigr\|_2^2
  - \tfrac{1}{n}\bigl(\mathbf{1}^{\!\top}(\tilde{\boldsymbol{\tau}}^{(v)}_s - \tilde{\boldsymbol{\tau}}^{(v)}_t)\bigr)^2,
\]
as claimed.
\end{proof}

Based on Lemma~\ref{lem:v-absorbed}, \bd\ is given by the difference between the distributions of two $v$-absorbed random walks starting from 
$s$ and $t$. If $v$ serves as a cut vertex separating $s$ and $t$, the two walks are restricted to their own subgraphs on each side of the cut.

\begin{lemma}[Cut-vertex Random Walk Representation of \bd]
\label{lem:vertex-cut}
Suppose \(v\) is a cut vertex that divides the graph into two disconnected components \(\mathcal{V}_s\) and \(\mathcal{V}_t\), with \(s \in \mathcal{V}_s\) and \(t \in \mathcal{V}_t\).  
If we order the vertices as \((\mathcal{V}_s,\,\mathcal{V}_t)\), the matrix \(\mathbf{L}_{v}^{-1}\) becomes block diagonal and can be written as 
$\mathbf{L}_{v}^{-1} =
\begin{bmatrix}
\mathbf{L}_{\mathcal{V}_s} & \mathbf{0} \\
\mathbf{0} & \mathbf{L}_{\mathcal{V}_t}
\end{bmatrix}$,
 where \(\mathbf{L}_{\mathcal{V}_s}\) is the part of \(\mathbf{L}_{v}^{-1}\) that contains the rows and columns of the vertices in \(\mathcal{V}_s \cup \{v\}\), and \(\mathbf{L}_{\mathcal{V}_t}\) is defined in the same way for \(\mathcal{V}_t \cup \{v\}\).   Let \(\tilde{\boldsymbol{\tau}}_{s}^{(v,\mathcal{V}_s)}\) and \(\tilde{\boldsymbol{\tau}}_{t}^{(v,\mathcal{V}_t)}\) denote the degree-normalized distributions of random walks starting from \(s\) and \(t\), respectively, each restricted to its corresponding subgraph \((\mathcal{V}_s \cup \{v\})\) and \((\mathcal{V}_t \cup \{v\})\) with absorption at \(v\).  
Then
\begin{align}
b(s,t) 
&= \big\| \mathbf{L}_{\mathcal{V}_s}^{-1}\mathbf{e}_s - \mathbf{L}_{\mathcal{V}_t}^{-1}\mathbf{e}_t \big\|_2^{2} 
   - \frac{1}{n} \left( \mathbf{1}^{\!\top} (\mathbf{L}_{\mathcal{V}_s}^{-1}\mathbf{e}_s - \mathbf{L}_{\mathcal{V}_t}^{-1}\mathbf{e}_t) \right)^{2} \notag\\
&= \big\| \tilde{\boldsymbol{\tau}}^{(v,\mathcal{V}_s)}_{s} - \tilde{\boldsymbol{\tau}}^{(v,\mathcal{V}_t)}_{t} \big\|_2^{2} 
   - \frac{1}{n} \left( \mathbf{1}^{\!\top} (\tilde{\boldsymbol{\tau}}^{(v,\mathcal{V}_s)}_{s} - \tilde{\boldsymbol{\tau}}^{(v,\mathcal{V}_t)}_{t}) \right)^{2}.
\label{eq:bd-vertex-cut}
\end{align}
\end{lemma}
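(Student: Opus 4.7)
The plan is to combine Lemma~\ref{lem:v-absorbed} with the block structure that a cut vertex forces on $\mathbf{L}_v$. First, I would verify that $\mathbf{L}_v$ is itself block diagonal in the $(\mathcal{V}_s,\mathcal{V}_t)$ ordering: since $v$ separates $\mathcal{V}_s$ from $\mathcal{V}_t$, no edge of $\mathcal{G}$ has one endpoint in $\mathcal{V}_s$ and the other in $\mathcal{V}_t$, so $\mathbf{L}_{ij}=0$ for every such pair. Thus $\mathbf{L}_v=\operatorname{diag}(\mathbf{L}_{\mathcal{V}_s},\mathbf{L}_{\mathcal{V}_t})$, where each diagonal block coincides with the principal submatrix of $\mathbf{L}$ on the corresponding vertex set, and can equivalently be read as the grounded Laplacian of $\mathcal{G}[\mathcal{V}_s\cup\{v\}]$ (respectively $\mathcal{G}[\mathcal{V}_t\cup\{v\}]$) after deleting the $v$-row/column. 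Both blocks are symmetric positive definite on their connected subgraphs and hence invertible, so inversion preserves block diagonality: $\mathbf{L}_v^{-1}=\operatorname{diag}(\mathbf{L}_{\mathcal{V}_s}^{-1},\mathbf{L}_{\mathcal{V}_t}^{-1})$.

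Next, I would exploit $s\in\mathcal{V}_s$ and $t\in\mathcal{V}_t$ to read off the relevant columns. The vector $\mathbf{L}_v^{-1}\mathbf{e}_s$ is supported only on the $\mathcal{V}_s$ coordinates, where it equals $\mathbf{L}_{\mathcal{V}_s}^{-1}\mathbf{e}_s$; symmetrically, $\mathbf{L}_v^{-1}\mathbf{e}_t$ lives on $\mathcal{V}_t$. Because their supports are disjoint inside $\mathbb{R}^{n-1}$, the two quantities appearing in Lemma~\ref{lem:v-absorbed} split cleanly:
\[
\|\mathbf{L}_v^{-1}(\mathbf{e}_s-\mathbf{e}_t)\|_2^2 = \|\mathbf{L}_{\mathcal{V}_s}^{-1}\mathbf{e}_s\|_2^2 + \|\mathbf{L}_{\mathcal{V}_t}^{-1}\mathbf{e}_t\|_2^2,
\]
and $\mathbf{1}^{\top}\mathbf{L}_v^{-1}(\mathbf{e}_s-\mathbf{e}_t)=\mathbf{1}^{\top}\mathbf{L}_{\mathcal{V}_s}^{-1}\mathbf{e}_s-\mathbf{1}^{\top}\mathbf{L}_{\mathcal{V}_t}^{-1}\mathbf{e}_t$. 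Plugging these into the matrix identity of Lemma~\ref{lem:v-absorbed} immediately gives the first equality of \eqref{eq:bd-vertex-cut}, under the convention that $\mathbf{L}_{\mathcal{V}_s}^{-1}\mathbf{e}_s-\mathbf{L}_{\mathcal{V}_t}^{-1}\mathbf{e}_t$ denotes the disjoint concatenation of the two block vectors, zero-padded on the opposite block.

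For the second equality I would identify each block vector with an absorbed random walk on the corresponding induced subgraph. Because every $s$-to-$\mathcal{V}_t$ path in $\mathcal{G}$ must pass through the cut vertex $v$, the $v$-absorbed walk on $\mathcal{G}$ starting at $s$ is absorbed before it can leave $\mathcal{V}_s\cup\{v\}$; hence it coincides trajectory-by-trajectory with the $v$-absorbed walk on $\mathcal{G}[\mathcal{V}_s\cup\{v\}]$ starting at $s$, since the two graphs share all edges incident to $\mathcal{V}_s$ and the degrees of vertices in $\mathcal{V}_s$ are identical in both. Combined with the identity $\tilde{\tau}^{(v)}_{u,w}=(\mathbf{L}_v^{-1})_{u,w}$ recalled from \cite{Landmark2023Liao}, this yields $\tilde{\boldsymbol{\tau}}^{(v,\mathcal{V}_s)}_{s}=\mathbf{L}_{\mathcal{V}_s}^{-1}\mathbf{e}_s$, and an analogous argument on the other side gives $\tilde{\boldsymbol{\tau}}^{(v,\mathcal{V}_t)}_{t}=\mathbf{L}_{\mathcal{V}_t}^{-1}\mathbf{e}_t$. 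Substituting into the first equality produces the random walk form.

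The main obstacle is largely notational rather than mathematical: the two block vectors live in different coordinate spaces, so I would fix at the outset the convention that every $\|\cdot\|_2$ and every $\mathbf{1}^{\top}(\cdot)$ in \eqref{eq:bd-vertex-cut} is evaluated after embedding each block into $\mathbb{R}^{n-1}$ by zero-padding on the complementary block. A secondary subtlety is making the trajectory-equivalence argument between the full-graph $v$-absorbed walk and the subgraph $v$-absorbed walk fully rigorous, but once one observes that $v$ is the unique gateway between the two sides, this reduces to matching one-step transition probabilities at every non-absorbed vertex. After these conventions are fixed, the remaining work is routine block-matrix algebra plus the two preceding lemmas.
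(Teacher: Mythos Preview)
Your proposal is correct and follows essentially the same route as the paper: both arguments start from Lemma~\ref{lem:v-absorbed}, use the cut-vertex property to make $\mathbf{L}_v$ block diagonal in the $(\mathcal{V}_s,\mathcal{V}_t)$ ordering, read off $\mathbf{L}_v^{-1}(\mathbf{e}_s-\mathbf{e}_t)$ block-wise under the zero-padding convention, and then identify each block with the subgraph $v$-absorbed walk via the $(\mathbf{L}_v^{-1})_{u,w}=\tilde{\tau}^{(v)}_{u,w}$ identity. Your treatment is in fact slightly more explicit about invertibility of the blocks and about the trajectory-equivalence argument than the paper's own proof.
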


\begin{proof}
By Lemma~\ref{lem:v-absorbed}, for any choice of absorbing node \(v\) we have
$b(s,t)
= \bigl\|\mathbf{L}_v^{-1}(\mathbf{e}_s-\mathbf{e}_t)\bigr\|_2^2
  - \tfrac{1}{n}\bigl(\mathbf{1}^{\!\top}\mathbf{L}_v^{-1}(\mathbf{e}_s-\mathbf{e}_t)\bigr)^2 .$
Since \(v\) is a cut vertex, removing \(v\) disconnects the remaining vertices into the two components \(\mathcal{V}_s\) and \(\mathcal{V}_t\).  
If we order the vertices as \((\mathcal{V}_s,\mathcal{V}_t)\), the grounded Laplacian \(\mathbf{L}_v\) is block diagonal, so its inverse has the form
$\mathbf{L}_v^{-1}
= \begin{bmatrix}
\mathbf{L}_{\mathcal{V}_s}^{-1} & \mathbf{0} \\
\mathbf{0} & \mathbf{L}_{\mathcal{V}_t}^{-1}
\end{bmatrix},$
where \(\mathbf{L}_{\mathcal{V}_s}^{-1}\) and \(\mathbf{L}_{\mathcal{V}_t}^{-1}\) are the inverses of the corresponding diagonal blocks of \(\mathbf{L}_v\).  
Identifying \(\mathbb{R}^{n-1}\) with \(\mathbb{R}^{|\mathcal{V}_s|}\oplus\mathbb{R}^{|\mathcal{V}_t|}\), and viewing \(\mathbf{L}_{\mathcal{V}_s}^{-1}\mathbf{e}_s\) and \(\mathbf{L}_{\mathcal{V}_t}^{-1}\mathbf{e}_t\) as vectors extended by zeros outside \(\mathcal{V}_s\) and \(\mathcal{V}_t\), respectively, we obtain
\begin{align*}
\mathbf{L}_v^{-1}(\mathbf{e}_s-\mathbf{e}_t)
&=
\begin{bmatrix}
\mathbf{L}_{\mathcal{V}_s}^{-1} & \mathbf{0} \\
\mathbf{0} & \mathbf{L}_{\mathcal{V}_t}^{-1}
\end{bmatrix}
\begin{bmatrix}
\mathbf{e}_s \\ -\mathbf{e}_t
\end{bmatrix}
=
\begin{bmatrix}
\mathbf{L}_{\mathcal{V}_s}^{-1}\mathbf{e}_s \\ -\mathbf{L}_{\mathcal{V}_t}^{-1}\mathbf{e}_t
\end{bmatrix}
= \mathbf{L}_{\mathcal{V}_s}^{-1}\mathbf{e}_s - \mathbf{L}_{\mathcal{V}_t}^{-1}\mathbf{e}_t .
\end{align*}
Substituting this into the expression for \(b(s,t)\) yields
\[
b(s,t)
= \bigl\|\mathbf{L}_{\mathcal{V}_s}^{-1}\mathbf{e}_s - \mathbf{L}_{\mathcal{V}_t}^{-1}\mathbf{e}_t\bigr\|_2^2
  - \tfrac{1}{n}\bigl(\mathbf{1}^{\!\top}(\mathbf{L}_{\mathcal{V}_s}^{-1}\mathbf{e}_s - \mathbf{L}_{\mathcal{V}_t}^{-1}\mathbf{e}_t)\bigr)^2,
\]
which is the first line of~\eqref{eq:bd-vertex-cut}.

For the random walk representation, previous work~\cite{Landmark2023Liao} implies \(\tilde{\boldsymbol{\tau}}^{(v)}_u = \mathbf{L}_v^{-1}\mathbf{e}_u\) for every \(u\).  
When \(u\in\mathcal{V}_s\), any \(v\)-absorbed random walk starting from \(u\) stays in \(\mathcal{V}_s\cup\{v\}\) until it is absorbed at \(v\), so the coordinates of \(\tilde{\boldsymbol{\tau}}^{(v)}_u\) on \(\mathcal{V}_t\) are zero, and its restriction to \(\mathcal{V}_s\cup\{v\}\) coincides with \(\tilde{\boldsymbol{\tau}}^{(v,\mathcal{V}_s)}_u\).  
Thus \(\tilde{\boldsymbol{\tau}}^{(v,\mathcal{V}_s)}_s\) is exactly \(\mathbf{L}_{\mathcal{V}_s}^{-1}\mathbf{e}_s\) (extended by zeros on \(\mathcal{V}_t\)), and similarly \(\tilde{\boldsymbol{\tau}}^{(v,\mathcal{V}_t)}_t\) is \(\mathbf{L}_{\mathcal{V}_t}^{-1}\mathbf{e}_t\) (extended by zeros on \(\mathcal{V}_s\)).  
Replacing \(\mathbf{L}_{\mathcal{V}_s}^{-1}\mathbf{e}_s\) and \(\mathbf{L}_{\mathcal{V}_t}^{-1}\mathbf{e}_t\) above by \(\tilde{\boldsymbol{\tau}}^{(v,\mathcal{V}_s)}_s\) and \(\tilde{\boldsymbol{\tau}}^{(v,\mathcal{V}_t)}_t\) gives exactly the identity in~\eqref{eq:bd-vertex-cut}.
\end{proof}

Lemma~\ref{lem:vertex-cut} states that if a vertex $v$ is a cut vertex that divides the graph into two disconnected components \(\mathcal{V}_s\) and \(\mathcal{V}_t\), with \(s \in \mathcal{V}_s\) and \(t \in \mathcal{V}_t\), then $b(s,t)$ can be formulated in terms of random walks restricted within \(\mathcal{V}_s\) and \(\mathcal{V}_t\).  
This division of the graph naturally prevents the slow mixing that tends to appear around small cuts.

\subsection{Cutting random walks by a cut set}\label{subsec:3.2}
On real-world graphs, a single cut vertex that separates the graph may not exist, or may be difficult to identify.
We therefore extend the previous idea to a small cut set $\mathcal{V}_{cut}=\{c_1,c_2,\dots,c_k\}$ that cut graph into disconnected components $\mathcal{V}_s$ and $\mathcal{V}_t$.
We show that $b(s,t)$ can also be represented by distribution of random walks in $\mathcal{V}_s$ and $\mathcal{V}_t$ (i.e., \(\tilde{\boldsymbol{\tau}}_{s}^{(v,\mathcal{V}_s)}\) and \(\tilde{\boldsymbol{\tau}}_{t}^{(v,\mathcal{V}_t)}\) ) additional with some matrices $\mathbf{M}$, which can also be represented by random walks.

In the following lemma, given a small cut set $\mathcal{V}_{cut}$ with a fixed order, we remove node $c_i\in \mathcal{V}_{cut}$ one by one so that, after removing all of them, the graph becomes two disconnected parts $\mathcal{V}_s$ and $\mathcal{V}_t$ containing \(s\) and \(t\).
\begin{lemma}[Cut-set Random Walk Representation of
\bd]
\label{lem:cut-set}
Let \(R^{(1)} := V\setminus \{v\}\), and after removing node \(c_j\), 
define the remaining vertex set \(R^{(j)} := R^{(j-1)} \setminus \{c_j\}\).
Reorder the Laplacian \(\mathbf{L}_{R^{(j-1)}}\) so that \(c_j\) is placed last: $\mathbf{L}_{R^{(j-1)}} =
\begin{bmatrix}
\mathbf{L}_{R^{(j)}} & -\boldsymbol a_{c_j}\\[3pt]
-\boldsymbol a_{c_j}^{\!\top} & d_{c_j}
\end{bmatrix}$, where \(\boldsymbol a_{c_j}\) is the vector of edge weights connecting \(c_j\) 
with the other vertices in \(R^{(j)}\), and \(d_{c_j}\) is the degree of \(c_j\). 
Let 
$S_{c_j} = d_{c_j} - \boldsymbol a_{c_j}^{\!\top} \mathbf{L}_{R^{(j)}}^{-1}\boldsymbol a_{c_j}$,
and define the contribution matrix
$\mathbf{M}_{c_j}^{(j)} =
\begin{bmatrix}
\mathbf{L}_{R^{(j)}}^{-1}\boldsymbol a_{c_j}\\[3pt]
1
\end{bmatrix}
S_{c_j}^{-1}
\begin{bmatrix}
\boldsymbol a_{c_j}^{\!\top}\mathbf{L}_{R^{(j)}}^{-1},\ 1
\end{bmatrix}$.  
\(\widetilde {\mathbf{M}}_{c_j}\) denotes \(\mathbf{M}_{c_j}^{(j)}\) padded with 0 
to match the full node set \(V \times V\). Then,
\begin{align}
b(s,t) 
&= 
\Big\|
\tilde{\boldsymbol{\tau}}^{(\mathcal{V}_{cut},\mathcal{V}_s)}_{s} 
    - \tilde{\boldsymbol{\tau}}^{(\mathcal{V}_{cut},\mathcal{V}_t)}_{t} 
    + \sum_{i=1}^{j}\widetilde {\mathbf{M}}_{c_i}(\mathbf{e}_s-\mathbf{e}_t)
\Big\|_2^{2} \notag\\
\quad&
- \frac{1}{n} 
\Big(
    \mathbf{1}^{\!\top} 
    \big(
        \tilde{\boldsymbol{\tau}}^{(\mathcal{V}_{cut},\mathcal{V}_s)}_{s} 
        - \tilde{\boldsymbol{\tau}}^{(\mathcal{V}_{cut},\mathcal{V}_t)}_{t} 
        + \sum_{i=1}^{j}\widetilde {\mathbf{M}}_{c_i}(\mathbf{e}_s-\mathbf{e}_t)
    \big)
\Big)^{2}.
\label{eq:bd-cut-set}
\end{align}
\end{lemma}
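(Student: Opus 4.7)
The plan is to start from Lemma~\ref{lem:v-absorbed}, which already expresses $b(s,t)$ via the grounded Laplacian inverse $\mathbf{L}_v^{-1}$ acting on $\mathbf{e}_s-\mathbf{e}_t$, and then to peel off the cut vertices $c_1,\dots,c_k$ one at a time by repeatedly applying the block-matrix inversion formula. The target is to show that each peeling step contributes exactly one additive correction $\widetilde{\mathbf{M}}_{c_j}$, and that after all cut vertices are removed the remaining inverse factors as the two independent restricted walk distributions $\tilde{\boldsymbol{\tau}}^{(\mathcal{V}_{cut},\mathcal{V}_s)}_s$ and $\tilde{\boldsymbol{\tau}}^{(\mathcal{V}_{cut},\mathcal{V}_t)}_t$ supplied by the random-walk interpretation of \cite{Landmark2023Liao}.

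First I would establish the one-step identity. Writing $\mathbf{L}_{R^{(j-1)}}$ in the stated $2\times 2$ block form with $c_j$ placed last, the standard Schur-complement inversion formula gives
\begin{equation*}
\mathbf{L}_{R^{(j-1)}}^{-1}
\;=\;
\begin{bmatrix} \mathbf{L}_{R^{(j)}}^{-1} & \mathbf{0}\\ \mathbf{0}^{\!\top} & 0 \end{bmatrix}
\;+\;
\begin{bmatrix} \mathbf{L}_{R^{(j)}}^{-1}\boldsymbol a_{c_j}\\ 1 \end{bmatrix}
S_{c_j}^{-1}
\begin{bmatrix} \boldsymbol a_{c_j}^{\!\top}\mathbf{L}_{R^{(j)}}^{-1},\ 1 \end{bmatrix},
\end{equation*}
which is exactly $\mathbf{L}_{R^{(j)}}^{-1}$ padded by a zero row and column for $c_j$, plus the contribution matrix $\mathbf{M}_{c_j}^{(j)}$. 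Padding both sides further by zeros on all of $\{v,c_1,\dots,c_{j-1}\}$ to embed them in $\mathbb{R}^{V\times V}$, this becomes the clean recurrence $\widetilde{\mathbf{L}_{R^{(j-1)}}^{-1}} = \widetilde{\mathbf{L}_{R^{(j)}}^{-1}} + \widetilde{\mathbf{M}}_{c_j}$.

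Next I would iterate this recurrence and telescope it from $j=1$ up to $j=k$, yielding $\widetilde{\mathbf{L}_{v}^{-1}} = \widetilde{\mathbf{L}_{R^{(k)}}^{-1}} + \sum_{i=1}^{k}\widetilde{\mathbf{M}}_{c_i}$. Because $\mathcal{V}_{cut}$ separates the remaining vertex set $R^{(k)}$ into the two components $\mathcal{V}_s$ and $\mathcal{V}_t$, the matrix $\mathbf{L}_{R^{(k)}}$ is block diagonal, so exactly as in the proof of Lemma~\ref{lem:vertex-cut} we have $\mathbf{L}_{R^{(k)}}^{-1}(\mathbf{e}_s-\mathbf{e}_t) = \mathbf{L}_{\mathcal{V}_s}^{-1}\mathbf{e}_s - \mathbf{L}_{\mathcal{V}_t}^{-1}\mathbf{e}_t$. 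Appealing to the random-walk identity $\tilde{\boldsymbol{\tau}}^{(v,\mathcal{V}_\star)}_u = \mathbf{L}_{\mathcal{V}_\star}^{-1}\mathbf{e}_u$ from \cite{Landmark2023Liao}, now applied to walks that are absorbed by the \emph{first} hit of any vertex in $\mathcal{V}_{cut}$ (equivalently, walks constrained to $\mathcal{V}_s\cup\mathcal{V}_{cut}$ or $\mathcal{V}_t\cup\mathcal{V}_{cut}$ respectively), this equals $\tilde{\boldsymbol{\tau}}^{(\mathcal{V}_{cut},\mathcal{V}_s)}_s - \tilde{\boldsymbol{\tau}}^{(\mathcal{V}_{cut},\mathcal{V}_t)}_t$. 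Substituting the telescoped identity into Lemma~\ref{lem:v-absorbed} then gives~\eqref{eq:bd-cut-set}.

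The main obstacle I anticipate is the bookkeeping of the zero-padding and the reordering across the $k$ peeling steps: at step $j$ the Schur identity naturally lives on the smaller index set $R^{(j)}$, but the final statement adds the correction matrices $\widetilde{\mathbf{M}}_{c_i}$ coordinate-wise on the full set $V\times V$, so I must verify that the embedding-by-zero commutes with both the block-inversion step and the subsequent action on $\mathbf{e}_s-\mathbf{e}_t$, and that the all-ones vector $\mathbf{1}$ in the centering term refers throughout to the same full-dimensional vector. Once this padding convention is fixed, verifying the recurrence and the terminal block-diagonal identification is essentially routine linear algebra combined with the already established random-walk interpretation.
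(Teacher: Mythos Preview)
Your proposal is correct and follows essentially the same route as the paper's proof: both apply the Schur-complement block-inversion formula to peel off $c_1,\dots,c_j$ one at a time, telescope the resulting recurrence to obtain $\widetilde{\mathbf{L}_{v}^{-1}} = \widetilde{\mathbf{L}_{R^{(j)}}^{-1}} + \sum_i\widetilde{\mathbf{M}}_{c_i}$, then exploit the block-diagonal structure of $\mathbf{L}_{R^{(j)}}^{-1}$ (as in Lemma~\ref{lem:vertex-cut}) together with the random-walk identification from~\cite{Landmark2023Liao}, and finally substitute into Lemma~\ref{lem:v-absorbed}. The bookkeeping concern you flag about zero-padding is handled in the paper exactly as you suggest, by fixing a single embedding into $\mathbb{R}^{V\times V}$ throughout.
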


\begin{proof}
By construction \(\mathbf{L}_v = \mathbf{L}_{R^{(1)}}\).
For each \(i\in\{1,\dots,j\}\), reorder the vertices in \(R^{(i-1)}\) so that
\(c_i\) is placed last, and write
\[
\mathbf{L}_{R^{(i-1)}} =
\begin{bmatrix}
\mathbf{L}_{R^{(i)}} & -\boldsymbol a_{c_i}\\[2pt]
-\boldsymbol a_{c_i}^{\!\top} & d_{c_i}
\end{bmatrix},
\qquad
S_{c_i} = d_{c_i} - \boldsymbol a_{c_i}^{\!\top}\mathbf{L}_{R^{(i)}}^{-1}\boldsymbol a_{c_i}.
\]
The standard block matrix inversion formula gives
\begin{align*}
\mathbf{L}_{R^{(i-1)}}^{-1}
&=
\begin{bmatrix}
\mathbf{L}_{R^{(i)}} & -\boldsymbol a_{c_i}\\[2pt]
-\boldsymbol a_{c_i}^{\!\top} & d_{c_i}
\end{bmatrix}^{-1} \\[2pt]
&=
\begin{bmatrix}
\mathbf{L}_{R^{(i)}}^{-1} & 0\\[2pt]
0 & 0
\end{bmatrix}
+
\begin{bmatrix}
\mathbf{L}_{R^{(i)}}^{-1}\boldsymbol a_{c_i}S_{c_i}^{-1}\boldsymbol a_{c_i}^{\!\top}\mathbf{L}_{R^{(i)}}^{-1}
&
\mathbf{L}_{R^{(i)}}^{-1}\boldsymbol a_{c_i}S_{c_i}^{-1}\\[2pt]
S_{c_i}^{-1}\boldsymbol a_{c_i}^{\!\top}\mathbf{L}_{R^{(i)}}^{-1}
&
S_{c_i}^{-1}
\end{bmatrix} \\[2pt]
&=
\begin{bmatrix}
\mathbf{L}_{R^{(i)}}^{-1} & 0\\[2pt]
0 & 0
\end{bmatrix}
+ \mathbf{M}_{c_i}^{(i)} .
\end{align*}
Padding by zeros to the full vertex set \(V\times V\), this can be written as
\(\mathbf{L}_{R^{(i-1)}}^{-1} = \widetilde{\mathbf{L}}_{R^{(i)}}^{-1} + \widetilde{\mathbf{M}}_{c_i}\).
Iterating from \(i=1\) to \(i=j\) yields
\[
\mathbf{L}_v^{-1}
= \mathbf{L}_{R^{(1)}}^{-1}
= \widetilde{\mathbf{L}}_{R^{(j)}}^{-1} + \sum_{i=1}^{j}\widetilde{\mathbf{M}}_{c_i}.
\]

After removing the cut set \(\mathcal{V}_{cut}\), the remaining vertices split into
two components \(\mathcal{V}_s\) and \(\mathcal{V}_t\) containing \(s\) and \(t\), respectively.
By Lemma~\ref{lem:vertex-cut}, the matrix \(\mathbf{L}_{R^{(j)}}^{-1}\) is block diagonal:
$\mathbf{L}_{R^{(j)}}^{-1}
=
\begin{bmatrix}
\mathbf{L}_{\mathcal{V}_s}^{-1} & 0\\
0 & \mathbf{L}_{\mathcal{V}_t}^{-1}
\end{bmatrix},$
so that, viewing the blocks as zero-padded to \(V\),
\begin{align*}
\mathbf{L}_v^{-1}(\mathbf{e}_s-\mathbf{e}_t)
&= \widetilde{\mathbf{L}}_{R^{(j)}}^{-1}(\mathbf{e}_s-\mathbf{e}_t)
   + \sum_{i=1}^{j}\widetilde{\mathbf{M}}_{c_i}(\mathbf{e}_s-\mathbf{e}_t) \\
&=
\begin{bmatrix}
\widetilde{\mathbf{L}}_{\mathcal{V}_s}^{-1} & 0\\
0 & \widetilde{\mathbf{L}}_{\mathcal{V}_t}^{-1}
\end{bmatrix}
(\mathbf{e}_s-\mathbf{e}_t)
+ \sum_{i=1}^{j}\widetilde{\mathbf{M}}_{c_i}(\mathbf{e}_s-\mathbf{e}_t) \\
&= \widetilde{\mathbf{L}}_{\mathcal{V}_s}^{-1}\mathbf{e}_s
   - \widetilde{\mathbf{L}}_{\mathcal{V}_t}^{-1}\mathbf{e}_t
   + \sum_{i=1}^{j}\widetilde{\mathbf{M}}_{c_i}(\mathbf{e}_s-\mathbf{e}_t).
\end{align*}
By Lemma~\ref{lem:v-absorbed},
\(\widetilde{\mathbf{L}}_{\mathcal{V}_s}^{-1}\mathbf{e}_s\) and
\(\widetilde{\mathbf{L}}_{\mathcal{V}_t}^{-1}\mathbf{e}_t\) are exactly the
degree-normalized expected visit distributions
\(\tilde{\boldsymbol{\tau}}^{(\mathcal{V}_{cut},\mathcal{V}_s)}_{s}\) and
\(\tilde{\boldsymbol{\tau}}^{(\mathcal{V}_{cut},\mathcal{V}_t)}_{t}\), respectively.
Therefore
\[
\mathbf{L}_v^{-1}(\mathbf{e}_s-\mathbf{e}_t)
=
\tilde{\boldsymbol{\tau}}^{(\mathcal{V}_{cut},\mathcal{V}_s)}_{s}
-
\tilde{\boldsymbol{\tau}}^{(\mathcal{V}_{cut},\mathcal{V}_t)}_{t}
+
\sum_{i=1}^{j}\widetilde{\mathbf{M}}_{c_i}(\mathbf{e}_s-\mathbf{e}_t).
\]

Finally, substituting this expression into the \(v\)-absorbed representation
of \bd in Lemma~\ref{lem:v-absorbed},
gives exactly the identity in~\eqref{eq:bd-cut-set}.
\end{proof}

In Lemma~\ref{lem:cut-set}, $\tilde{\boldsymbol{\tau}}^{(\mathcal{V}_{cut},\mathcal{V}_s)}_{s}$ and $\tilde{\boldsymbol{\tau}}^{(\mathcal{V}_{cut},\mathcal{V}_s)}_{s}$ represents two independent 
$v$-absorbed random walks, one inside each subgraph $\mathcal{V}_s,\mathcal{V}_t$.
Each contribution matrix $\widetilde {\mathbf{M}}_{c_j}$ deterministically adds the paths that go through node $c_j \in \mathcal{V}_{cut}$ while avoiding all the previous random paths.
Together, these terms recover the full $v$-absorbed random walk behavior on the original graph. This insight motivates our approach: we iteratively cut the graph into smaller pieces and deterministically store the corresponding random walk distributions.

\begin{figure}[t]
  \centering

  \makebox[\linewidth][c]{%
    \begin{minipage}[t]{0.35\linewidth}\centering
      \includegraphics[width=\linewidth]{./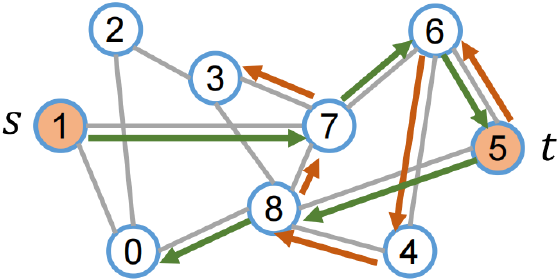}\\[-0.3em]
      (a)
    \end{minipage}%
    \hspace{0.06\linewidth}
    \begin{minipage}[t]{0.35\linewidth}\centering
      \includegraphics[width=\linewidth]{./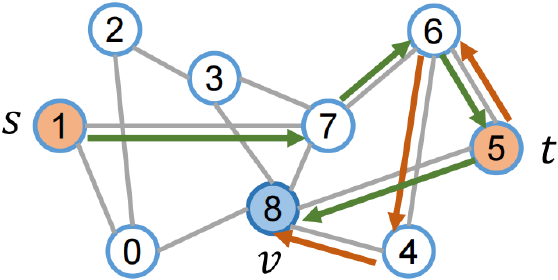}\\[-0.3em]
      (b)
    \end{minipage}%
  }

  \makebox[\linewidth][c]{%
    \begin{minipage}[t]{0.36\linewidth}\centering
      \includegraphics[width=\linewidth]{./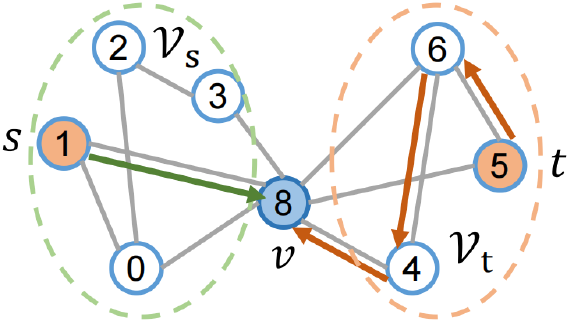}\\[-0.3em]
      (c)
    \end{minipage}%
    \hspace{0.06\linewidth}%
    \begin{minipage}[t]{0.36\linewidth}\centering
      \includegraphics[width=\linewidth]{./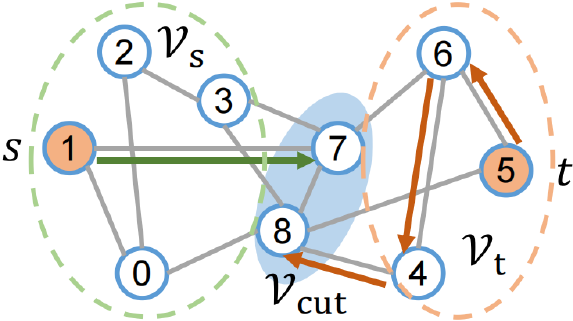}\\[-0.3em]
      (d)
    \end{minipage}%
  }
  \caption{Illustration of the proposed formulas of \bd. (a) \bd can be interpreted by random walks from $s$ and $t$ on the whole graph; (b) \bd can be interpreted by random walks from $s$ and $t$ until hitting $v$; (c) If $v$ is a cut vertex, \bd can be interpreted by random walks independently on $\mathcal{V}_s$ and $\mathcal{V}_t$; (d) \bd can be interpreted by random walks independently on $\mathcal{V}_s$ and $\mathcal{V}_t$, separated by a small cut set $\mathcal{V}_{cut}=\{v_7,v_8\}$.}
  \label{fig:bd-illustration}
\end{figure}

\section{The proposed approach: BD-Index}\label{sec:bd-index}
\begin{figure*}
    \centering
    \includegraphics[width=0.99\linewidth]{./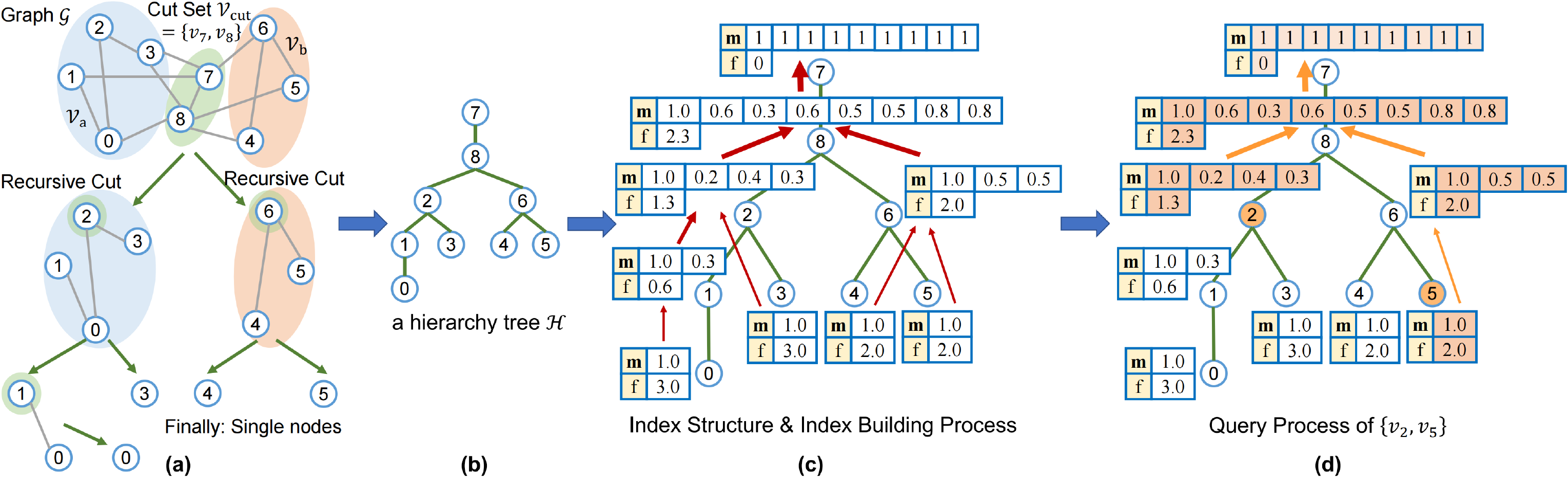}
    \caption{Illustration of \bindex's index structure, index building process and query process}
    \label{fig:procedure4all}
\end{figure*}

\subsection{High-level idea}
\stitle{Divide: cut graph into pieces.}
Building upon Lemma~\ref{lem:cut-set}, the \bd\ can be expressed as two parts:  
(i) the difference between the degree-normalized random walk distributions inside the two subgraphs \(\mathcal{V}_s\) and \(\mathcal{V}_t\), and  
(ii) a term that contains \(k = |\mathcal{V}_{\mathrm{cut}}|\) contribution matrices
\(\sum_{j=1}^{k}\widetilde{\mathbf{M}}_{c_j}\),  
which together deterministically describe the distribution of random walks that pass through the cut set \(\mathcal{V}_{\mathrm{cut}}\).  
By cutting the graph with a small cut set \(\mathcal{V}_{\mathrm{cut}}\),  
we can transform random walks on the full graph into random walks restricted in the two subgraphs  
\(\mathcal{V}_s\) and \(\mathcal{V}_t\), together with a small number of matrices \(\widetilde{\mathbf{M}}_{c_j}\).  
When the cut set size \(k\) is small, these matrices can be computed efficiently.

However, a single cut divides the graph into only two parts.  
It is easy to observe that the same property holds recursively for each subgraph:  
the random walk distribution within any subgraph can again be decomposed  
by using another small cut set that splits the subgraph into smaller pieces.  
Repeating this process eventually leads to subgraphs that are so small  
that their degree-normalized random walk distributions can be computed exactly.  
In the case where a subgraph contains only one node \(v\),  
the random walk distribution equals \(1\),  
and its degree-normalized form is \(1/d_v\).  
Therefore, by recursively partitioning the graph with a series of small cut sets until each block has only one vertex,  
we can deterministically compute the random walk distributions on the entire graph  
with very low computational cost.

Then, a natural question arises: how to store all the contribution matrices $\widetilde{\mathbf{M}}$ efficiently?  At first sight, storing all contribution matrices \(\{\widetilde{\mathbf{M}}_{c_j}\}\) seems to require \(O(n^3)\) space.  
However, each matrix \(\widetilde{\mathbf{M}}_{c_j}\), which represents all random walks that pass through the cut set,  
can be decomposed into three parts:  
(i) walks that start from nodes in \(\mathcal{V}_s \cup \mathcal{V}_t\) and first reach the cut set \(\mathcal{V}_{\mathrm{cut}}\);  
(ii) walks that start from the cut set and reach nodes in \(\mathcal{V}_s \cup \mathcal{V}_t\); and 
(iii) walks that start and end within the cut set itself.  
The first two parts can be represented by one vector of length at most \(n\),  
while the third part corresponds to a constant value shared by all node pairs \(s,t \in \mathcal{V}_s \cup \mathcal{V}_t\).  
Hence, the information in each matrix \(\widetilde{\mathbf{M}}_{c_j}\) can be stored using only one vector and one constant,  
which greatly reduces the total space needed to represent all random walk distributions.

\begin{lemma}[Compact Representation of the Contribution Matrices]\label{lem:calM}
Consider one cut node \(x \in \mathcal{V}_{\mathrm{cut}}\) in the process described in Lemma~\ref{lem:cut-set}.  
Let \(R\) be the current set of nodes that still contains \(x\),  
and let \(L_R\) denote the submatrix of the Laplacian on the nodes in \(R \setminus \{x\}\).  
Let \(\boldsymbol{a}_x\) be the vector that contains the edges between \(x\) and its neighboring nodes in \(R \setminus \{x\}\),  
and let \(d_x\) be the degree of node \(x\).

Then, the contribution matrix of \(x\) can be written as the product of two vectors:
\begin{equation}\label{eq:calM}
\mathbf{M}_x
=
\begin{bmatrix}
\boldsymbol{m}_x \\[2pt] 1
\end{bmatrix}
\, f_x^{-1}\,
\begin{bmatrix}
\boldsymbol{m}_x^{\!\top} & 1
\end{bmatrix},
\end{equation}
where
\begin{equation*}\label{eq:mx-fx}
\boldsymbol{m}_x = L_R^{-1}\boldsymbol{a}_x,
\qquad
f_x = d_x - \boldsymbol{a}_x^{\!\top} L_R^{-1}\boldsymbol{a}_x.
\end{equation*}
\end{lemma}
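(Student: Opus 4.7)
The plan is to obtain this compact factorization essentially by inspection of the definition of $\mathbf{M}_{c_j}^{(j)}$ given in Lemma~\ref{lem:cut-set}, so the bulk of the work is notational bookkeeping and recognizing an already-rank-one structure. First I would align the notation: the set $R$ in the current lemma plays the role of $R^{(j-1)}$ in Lemma~\ref{lem:cut-set} with $x=c_j$, so that $L_R$ (the submatrix on $R\setminus\{x\}$) corresponds to $\mathbf{L}_{R^{(j)}}$, the vector $\boldsymbol{a}_x$ to $\boldsymbol{a}_{c_j}$, and the scalar $f_x = d_x - \boldsymbol{a}_x^{\!\top}L_R^{-1}\boldsymbol{a}_x$ to the Schur complement $S_{c_j}$.

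Next I would substitute these identifications into the definition
\[
\mathbf{M}_{c_j}^{(j)}
=
\begin{bmatrix}
\mathbf{L}_{R^{(j)}}^{-1}\boldsymbol a_{c_j}\\[2pt]
1
\end{bmatrix}
S_{c_j}^{-1}
\begin{bmatrix}
\boldsymbol a_{c_j}^{\!\top}\mathbf{L}_{R^{(j)}}^{-1} & 1
\end{bmatrix},
\]
and observe that this expression is already in the claimed outer-product form $[\boldsymbol m_x;\,1]\,f_x^{-1}\,[\boldsymbol m_x^{\!\top},\,1]$ once one introduces the shorthand $\boldsymbol m_x := L_R^{-1}\boldsymbol a_x$. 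No matrix computation is needed; it suffices to verify that the top block of the column vector equals $\boldsymbol m_x$, the left block of the row vector equals $\boldsymbol m_x^{\!\top}$, and the middle scalar equals $f_x^{-1}$.

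Finally, I would close the proof with a short remark on the intended consequence: because $\mathbf{M}_x$ is rank one and fully determined by the pair $(\boldsymbol m_x, f_x)$, storing it requires only a single vector of length $|R|-1$ together with one scalar, rather than the full $|R|\times|R|$ matrix. The only subtlety worth highlighting is that $f_x$ is guaranteed to be nonzero (and in fact positive) because it is the Schur complement of the positive-definite grounded Laplacian block $\mathbf{L}_{R^{(j-1)}}$, so the inverse $f_x^{-1}$ is well-defined; this is the one place I would pause to justify, since the rest of the argument is a direct rewrite. There is no serious obstacle: this lemma is a repackaging of the block-inverse identity already used in Lemma~\ref{lem:cut-set}, and its role is to prepare for the space analysis of \bindex\ rather than to introduce new mathematical content.
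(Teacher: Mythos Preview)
Your proposal is correct and rests on the same block-inverse/Schur-complement identity as the paper. The only difference is presentational: you observe that the definition of $\mathbf{M}_{c_j}^{(j)}$ in Lemma~\ref{lem:cut-set} is already in the rank-one outer-product form and simply rename the pieces, whereas the paper re-derives that form by writing the Laplacian on $R$ in block form with $x$ last and applying the block-matrix inversion formula again, making the lemma self-contained rather than appealing back to Lemma~\ref{lem:cut-set}.
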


\begin{proof}
Consider the Laplacian on the current node set \(R\), reordered so that \(x\) is placed last.
By definition of \(L_R\), \(\boldsymbol{a}_x\), and \(d_x\), this Laplacian has the block form
$\widehat{\mathbf{L}}_R
=
\begin{bmatrix}
L_R & -\boldsymbol{a}_x \\[2pt]
-\boldsymbol{a}_x^{\!\top} & d_x
\end{bmatrix}.$
Since \(L_R\) is nonsingular, we can invert \(\widehat{\mathbf{L}}_R\) using the standard
block-matrix inversion formula.  With
\(A=L_R\), \(B=-\boldsymbol{a}_x\), \(C=-\boldsymbol{a}_x^{\!\top}\), and \(D=d_x\),
the Schur complement of \(A\) is
\[
S = D - C A^{-1} B
  = d_x - \boldsymbol{a}_x^{\!\top} L_R^{-1}\boldsymbol{a}_x
  = f_x.
\]
The inverse is then
\begin{align*}
\widehat{\mathbf{L}}_R^{-1}
&=
\begin{bmatrix}
L_R^{-1} + L_R^{-1} B S^{-1} C L_R^{-1} & -L_R^{-1} B S^{-1} \\[2pt]
- S^{-1} C L_R^{-1} & S^{-1}
\end{bmatrix} \\
&=
\begin{bmatrix}
L_R^{-1} & 0 \\[2pt]
0 & 0
\end{bmatrix}
+
\begin{bmatrix}
L_R^{-1}\boldsymbol{a}_x \\[2pt]
1
\end{bmatrix}
f_x^{-1}
\begin{bmatrix}
\boldsymbol{a}_x^{\!\top} L_R^{-1},\ 1
\end{bmatrix}.
\end{align*}
By definition (cf.\ Lemma~\ref{lem:cut-set}), the second term is exactly the
contribution matrix \(\mathbf{M}_x\) associated with eliminating node \(x\).  
Writing \(\boldsymbol{m}_x = L_R^{-1}\boldsymbol{a}_x\) gives
\[
\mathbf{M}_x
=
\begin{bmatrix}
\boldsymbol{m}_x \\[2pt] 1
\end{bmatrix}
f_x^{-1}
\begin{bmatrix}
\boldsymbol{m}_x^{\!\top} & 1
\end{bmatrix},
\]
which is \eqref{eq:calM}.
\end{proof}

By Lemma~\ref{lem:calM}, all information of every step in Lemma~\ref{lem:cut-set} can be kept by storing only the pair \((\boldsymbol{m}_x, f_x)\),  
where \(\boldsymbol{m}_x\) is kept on the nodes of \(R\) that are connected to \(x\).  
This form avoids building the whole matrix \(\mathbf{M}_x\)  
and keeps exactly the same contribution.

We first build a recursive cut hierarchy, denoted by $\mathcal{H}$, by cutting the graph until all leaves become single vertices.  
Any effective routine can be applied; we adopt a recursive minimum vertex cut~\cite{Mincut1998}, which repeatedly finds a small near-minimum cut at each level, as the default, and also evaluate a degree-based heuristic cut~\cite{TreeDecomposion1991} that splits the graph around nodes with small degrees.

\stitle{Conquer: a bottom-up method.}
After dividing the graph, we obtain a hierarchical structure $\mathcal{H}$.  
For each node $v$, its contribution matrix $\widetilde{\mathbf{M}}_v$ depends only on its descendants, denoted $\desc (v)$, that is, all nodes below it in the hierarchy.  
Hence, the contribution matrices can be computed in a bottom-up manner, where each node $v$ is updated from the information of its descendants.
\begin{lemma}[Bottom-Up Aggregation]
Let $(\mathbf{m}_u, f_u)$ be the stored pair for each $u \in \desc (v)$.  
Then the pair for node $v$ can be obtained by combining the pairs of all its descendants:
\begin{equation}\label{eq:calmf}
\begin{aligned}
\mathbf m_v
&= \mathbf e_v
   + \sum_{u\in \mathrm{Desc}(v)}
     \left(
       \frac{\sum_{x\in N(v)\cap \mathrm{Desc}(u)} \mathbf m_u[x]}{\,f_u\,}
     \right)\mathbf m_u,\\[4pt]
f_v
&= d_v - \sum_{x\in N(v) \cap \mathrm{Desc}(v)} \mathbf m_v[x].
\end{aligned}
\end{equation}

\end{lemma}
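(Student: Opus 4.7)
The plan is to prove the two recursive identities by induction on the depth of $v$ in the hierarchy tree $\mathcal{H}$, using Lemma~\ref{lem:calM} (which defines $\mathbf{m}_v = L_R^{-1}\boldsymbol{a}_v$ and $f_v = d_v - \boldsymbol{a}_v^{\!\top}L_R^{-1}\boldsymbol{a}_v$) together with the iterated rank-one expansion from Lemma~\ref{lem:cut-set}. Throughout, I would interpret the stored $\mathbf{m}_v$ as the extended vector $\hat{\mathbf{m}}_v = [\mathbf{m}_v;1]$ whose $v$-coordinate equals $1$, so that the $\mathbf{e}_v$ appearing in the formula corresponds exactly to that extra entry while the remaining coordinates (on $\desc(v)$) encode $L_R^{-1}\boldsymbol{a}_v$.

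First, I would make explicit the block-diagonal structure of $L_R$ restricted to $\desc(v)$. Because $\mathcal{H}$ is built by vertex cuts, $\desc(v)$ partitions into the child subtrees of $v$ and these subtrees carry no cross-edges in the original graph. Iterating the identity $\mathbf{L}_{R^{(i-1)}}^{-1}=\widetilde{\mathbf{L}}_{R^{(i)}}^{-1}+\widetilde{\mathbf{M}}_{c_i}$ of Lemma~\ref{lem:cut-set} inside each child block in the bottom-up order, the inductive hypothesis then supplies
\[
L_R^{-1}
\;=\;
\sum_{u\in\desc(v)} \tfrac{1}{f_u}\,\hat{\mathbf{m}}_u\hat{\mathbf{m}}_u^{\!\top},
\]
where each $(\mathbf{m}_u,f_u)$ is the pair already computed for $u$ by the recursion.

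Next, I would apply this expansion to the edge-indicator vector $\boldsymbol{a}_v$. Since $\boldsymbol{a}_v$ is the $\{0,1\}$-vector supported on $N(v)\cap\desc(v)$, the inner product $\hat{\mathbf{m}}_u^{\!\top}\boldsymbol{a}_v$ collapses to $\sum_{x\in N(v)\cap\desc(u)}\mathbf{m}_u[x]$ (with the convention that $u\in\desc(u)$ and $\hat{\mathbf{m}}_u[u]=1$), which is exactly the numerator of the coefficient attached to $\mathbf{m}_u$ in the statement. Adding the $\mathbf{e}_v$ term to account for the $1$ at $\hat{\mathbf{m}}_v[v]$ yields the claimed formula for $\mathbf{m}_v$; and $f_v = d_v - \boldsymbol{a}_v^{\!\top}\mathbf{m}_v$ from Lemma~\ref{lem:calM} becomes $f_v = d_v - \sum_{x\in N(v)\cap\desc(v)}\mathbf{m}_v[x]$ once we recognize $\boldsymbol{a}_v$ as the indicator of $N(v)\cap\desc(v)$.

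The step I expect to require the most care is justifying the block-diagonal decomposition of $L_R$ and the associated rank-one expansion when the hierarchical cut at $v$'s level is a multi-vertex cut set $\mathcal{V}_{cut}$ rather than a single cut vertex (the setting of Section~\ref{subsec:3.2}). In that case the children's subtrees are not immediately disconnected inside $\desc(v)$---only removing the whole cut set, which may include vertices stored at ancestor hierarchy nodes, separates them---so one has to unfold the cut-set elimination in precisely the order prescribed by Lemma~\ref{lem:cut-set} and verify that the zero-padded contribution matrices $\widetilde{\mathbf{M}}_u$ compose back into the correct $L_R^{-1}$. Once that bookkeeping is in place, the rest of the argument is the routine algebra sketched above.
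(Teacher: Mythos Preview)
Your plan is correct and matches the paper's proof: both use Lemma~\ref{lem:calM} together with the rank-one telescoping of Lemma~\ref{lem:cut-set} to recover $\mathbf m_v=L_R^{-1}\boldsymbol a_v$ from the descendants' pairs, and your explicit identity $L_R^{-1}=\sum_{u}f_u^{-1}\hat{\mathbf m}_u\hat{\mathbf m}_u^{\!\top}$ is the same content as the paper's ``reduced Laplacian via rank-one updates'' step.

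Your final worry is unnecessary, though. By construction of $\mathcal{H}$, a node $v$ lying strictly inside a multi-vertex cut chain has a \emph{single} hierarchy child (the next cut vertex below it), so $\desc(v)\setminus\{v\}$ is just one block and there is nothing to disconnect; and when $v$ is the bottom of a cut chain, the remaining cut-set vertices are all ancestors of $v$ and hence already excluded from $\desc(v)$, so the child subtrees of $v$ are genuinely disconnected in $\desc(v)\setminus\{v\}$. The block-diagonal decomposition you assert at the outset therefore always holds, and the extra unfolding you anticipate is not needed.
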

\begin{proof}
For every descendant \(u\in\mathrm{Desc}(v)\), the stored pair
\((\mathbf m_u,f_u)\) summarizes the total contribution of all cut nodes in
the subtree rooted at \(u\): the sum of their contribution matrices has the
rank–one form
\[
\mathbf M_u
=
\begin{bmatrix}
\mathbf m_u\\[2pt] 1
\end{bmatrix}
f_u^{-1}
\begin{bmatrix}
\mathbf m_u^{\!\top} & 1
\end{bmatrix},
\]
supported only on \(\mathrm{Desc}(u)\cup\{v\}\).
Since different subtrees are disjoint, the reduced Laplacian on
\(\{v\}\cup\mathrm{Desc}(v)\) after eliminating all descendants can be
obtained by adding, for each \(u\in\mathrm{Desc}(v)\), the rank–one update
\(\mathbf M_u\) to the original block Laplacian.

Now consider eliminating \(v\) in this reduced graph.  Let
\(\widehat{R} = \mathrm{Desc}(v)\) and let \(\widehat{L}_{\widehat{R}}\) be
the Laplacian on \(\widehat{R}\) after all these updates.
By construction, the new adjacency between \(v\) and the nodes in
the subtree of \(u\) is the sum of the entries in the last column of
\(\mathbf M_u\) on the neighbors of \(v\), i.e.,
$\sum_{x\in N(v)\cap \mathrm{Desc}(u)} \frac{\mathbf m_u[x]}{f_u} .$
Therefore the updated adjacency vector from \(v\) to
\(\widehat{R}=\mathrm{Desc}(v)\) can be written as
\[
\boldsymbol a_v
= \sum_{u\in\mathrm{Desc}(v)}
   \left(
     \frac{\sum_{x\in N(v)\cap \mathrm{Desc}(u)} \mathbf m_u[x]}{f_u}
   \right)\mathbf m_u .
\]

Applying Lemma~\ref{lem:calM} once more to node \(v\), with
\(\widehat{L}_{\widehat{R}}\) and \(\boldsymbol a_v\), we obtain
\[
\mathbf m_v
= \widehat{L}_{\widehat{R}}^{-1}\boldsymbol a_v
= \sum_{u\in\mathrm{Desc}(v)}
   \left(
     \frac{\sum_{x\in N(v)\cap \mathrm{Desc}(u)} \mathbf m_u[x]}{f_u}
   \right)\mathbf m_u
   + \mathbf e_v,
\]
where the term \(\mathbf e_v\) accounts for the unit entry at \(v\) itself.
This is exactly the first line in \eqref{eq:calmf}.  The corresponding
Schur complement at \(v\) is
\[
f_v
= d_v - \boldsymbol a_v^{\!\top}\mathbf m_v
= d_v - \sum_{x\in N(v)\cap\mathrm{Desc}(v)} \mathbf m_v[x],
\]
which is the second line in \eqref{eq:calmf}.  Hence the pair
\((\mathbf m_v,f_v)\) is obtained by aggregating the pairs of all
descendants, as claimed.
\end{proof}

\subsection{Index structure}
A rooted tree is a hierarchical structure with a single distinguished node called the \emph{root}.  
Each edge connects a node to one of its lower nodes, forming a parent–child relation.  
A \emph{branch} refers to a point in the tree where a node has more than one child.  
For any node $v$, we use $\anc(v)$ to denote the set of all its ancestors, that is, the nodes on the path from the root to $v$ (including $v$ itself).  
Similarly, we use $\desc(v)$ to denote the set of all its descendants, that is, the nodes in the subtree rooted at $v$ (including $v$ itself).  

The index structure of \bindex is to store the pair $(\mathbf{m}_v, f_v)$ for each node $v$. The cut hierarchy $\mathcal{H}$ is organized as a rooted tree, where each branching corresponds to a cut. The vertices within each cut set are arranged in a fixed order, consistent with the order defined in Section~\ref{subsec:3.2}.

According to Lemma~\ref{lem:calM}, each node $v$ is associated with a vector $\mathbf{m}_v$ and a scalar $f_v$.  
The length of $\mathbf{m}_v$ equals the size of $\desc (v)$, that is, the number of nodes in the subtree rooted at $v$.  
To avoid storing a full $n$-dimensional vector with many empty entries, we construct $\mathbf{m}_v$ as a compact vector of size $|\desc (v)|$.  
We implicitly index its entries using the depth-first search (\dfs) order of the hierarchy tree, so that the $i$-th entry of $\mathbf{m}_v$ corresponds to the $i$-th node in the \dfs\ sequence under $v$.  
For example, in Figure~\ref{fig:procedure4all}(c), \bindex on node $v_2$ stores: its parent node number $v_8$, a constant $f_{v_2}=1.3$, and a vector $\mathbf{m}_{v_2}$. Specifically, $\mathbf{m}_{v_2}[0]=1.0$ for $v_2$ itself,  
$\mathbf{m}_{v_2}[1]=0.2$ for $v_1$,  
$\mathbf{m}_{v_2}[2]=0.4$ for $v_0$,  
and $\mathbf{m}_{v_2}[3]=0.3$ for $v_3$,  
where the entries follow the \dfs order under $v_2$.
\begin{lemma}[Space Complexity of the Index Structure]
Let $h$ be the height of the hierarchy tree $\mathcal{H}$.  
The total space required to store all pairs $(\mathbf{m}_v, f_v)$ in the index structure is $O(n\cdot h)$.
\end{lemma}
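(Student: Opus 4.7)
The plan is to reduce the space bound to a counting argument over ancestor/descendant relationships in the hierarchy tree $\mathcal{H}$. The total storage is exactly
\[
\sum_{v \in \mathcal{V}} \bigl(|\mathbf{m}_v| + 1\bigr) \;=\; \sum_{v \in \mathcal{V}} |\desc(v)| \;+\; n,
\]
since $\mathbf{m}_v$ has length $|\desc(v)|$ by construction (its entries are indexed by the DFS order under $v$) and each $f_v$ contributes a single scalar. The $O(n)$ term is negligible, so the main task is to bound $\sum_v |\desc(v)|$.

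The first step I would take is to swap the order of summation using the standard double-counting identity
\[
\sum_{v \in \mathcal{V}} |\desc(v)| \;=\; \bigl|\{(v,u) : u \in \desc(v)\}\bigr| \;=\; \sum_{u \in \mathcal{V}} |\anc(u)|,
\]
which holds because $u \in \desc(v)$ if and only if $v \in \anc(u)$. This converts the descendant sum into an ancestor sum, which is easier to bound pointwise.

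Next, I would invoke the assumption that $\mathcal{H}$ has height $h$: every node $u$ lies on a root-to-$u$ path of length at most $h$, so $|\anc(u)| \le h$. Summing over $u$ gives $\sum_u |\anc(u)| \le n \cdot h$, and combined with the $O(n)$ scalar contribution this yields the claimed $O(n \cdot h)$ bound.

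Honestly, there is no serious obstacle here: once the compact DFS-indexed representation of $\mathbf{m}_v$ is taken at face value, the bound follows immediately from the ancestor-descendant double counting. The only subtlety worth flagging explicitly is that the compact layout is essential — if $\mathbf{m}_v$ were stored as a full length-$n$ vector with zero padding outside $\desc(v)$, the total space would degenerate to $\Theta(n^2)$. I would therefore briefly emphasize in the proof that the DFS-order indexing scheme introduced just above the lemma is what makes the $|\desc(v)|$ count tight.
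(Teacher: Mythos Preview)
Your proposal is correct and matches the paper's own proof essentially line for line: the paper also observes $|\mathbf m_v|=|\desc(v)|$, applies the double-counting identity $\sum_v |\desc(v)|=\sum_x |\anc(x)|$, and bounds each $|\anc(x)|\le h$ to obtain $O(n\cdot h)$. Your additional remark about the compact DFS-indexed storage being essential is a nice clarification but not present in the paper's terse version.
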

\begin{proof}
By construction, $|\mathbf m_v|=|\mathrm{Desc}(v)|$. Summing over all $v$ counts each vertex $x$ once for each ancestor of $x$:
\[
\sum_{v} |\mathrm{Desc}(v)|
=\sum_{x} |\mathrm{Anc}(x)|
\le n h.
\]
Each pair $(\mathbf m_v,f_v)$ thus costs $O(|\mathrm{Desc}(v)|+1)$ space; the total is $O(n\cdot h)$.
\end{proof}
In real-world graphs, $h$ is typically much smaller than $n$ (as confirmed in our experiments), resulting in a small size of our index structure. For example, on \fullusa dataset ($n=23,947,348$, $m=28,854,319$), our index size  is only 168GB. 
\subsection{Index building}
\label{sec:subsecib}

The index construction begins with the hierarchy tree $\mathcal{H}$.  
Starting from the original graph, we recursively find small cut sets to divide the graph into subgraphs. For each cut set, the vertices are arranged into a short chain following the same order as defined in Section~\ref{subsec:3.2}.  
Each subgraph separated by the cut set is then attached as a child subtree.  
Repeating this process recursively yields the complete hierarchy tree $\mathcal{H}$ (corresponding to Line~14 in Algorithm~\ref{alg:indexbuilding}). As finding an optimal cut hierarchy is challenging, we adpot two heuristics: degree-based heuristic and recursive minimum cut heuristic. 

\stitle{Recursive minimum cut heuristic.} The detail of the minimum cut-based method is illustrated in Algorithm~\ref{alg:minimum-cut}. Given a graph $\mathcal{G}$, the algorithm constructs a hierarchy tree $\mathcal{H}$ in a recursive manner. 
For the input graph $\mathcal{G}$, we first call the function \textsc{GetApproxCutSet} to obtain an approximate vertex cut set. 
The vertices in this cut set are connected sequentially to form a chain in the hierarchy tree $\mathcal{H}$. 
Removing the cut set divides $\mathcal{G}$ into $k$ subgraphs (Lines~9-10).   
For each subgraph, we again apply \textsc{GetApproxCutSet} to compute its own vertex cut set, which is linked into another chain and attached as a child subtree below the corresponding node of the previous chain in $\mathcal{H}$ (Lines~11-12). 
This recursive process continues until a subgraph contains only a single vertex, at which point \textsc{GetApproxCutSet} is no longer invoked (Lines~6-8, 13-14). 

Finding the minimum vertex cut of a graph is NP-hard~\cite{np}. 
In our implementation (Line~9, function \textsc{GetApproxCutSet}), we adopt the widely used approximation algorithm \textsc{MEITS}~\cite{METIS,Mincut1998}, 
whose computational complexity is $O(m)$.
Since our hierarchy construction algorithm applies this process recursively at each level, the total running time is proportional to the number of edges times the hierarchy depth, giving an overall complexity of $O(m\cdot h)$.

\begin{algorithm}[t!]
\small
\caption{Minimum cut-based hierarchy construction}
\label{alg:minimum-cut}
\KwIn{Graph $\mathcal{G}=(\mathcal{V},\mathcal{E})$}
\KwOut{Hierarchy tree $\mathcal{H}$}

\SetKwFunction{FMain}{BuildHierarchyTree}
\SetKwFunction{FRec}{BuildSubtree}
\SetKwProg{Fn}{Function}{:}{}

\Fn{\FMain{$\mathcal{G}$}}{
    Initialize an empty hierarchy tree $\mathcal{H}$\;
    $\FRec(\mathcal{G}, null)$\;
    \KwRet{$\mathcal{H}$}\;
}

\Fn{\FRec{$\mathcal{G}=(\mathcal{V},\mathcal{E})$, parent node $p$}}{
    \If{$|\mathcal{V}|=1$}{
        Add a node $v$ labeled by the single vertex in $\mathcal{V}$ under parent $p$ in $\mathcal{H}$\;
        \KwRet{}\;
    }

    $C \gets \textsc{GetApproxCutSet}(\mathcal{G})$ \tcp*{Approximate minimum vertex cut returned by METIS~\cite{METIS}}
    Add all vertices in $C$ as a chain under parent $p$ in $\mathcal{H}$\;
    Let $b$ be the bottom node of this chain (if $C=\emptyset$, $b \gets p$)\;

    $\{\mathcal{G}_1, \mathcal{G}_2, \ldots, \mathcal{G}_k\} \gets$ connected components of the induced subgraph $\mathcal{G}[\mathcal{V} \setminus C]$\;
    \ForEach{$\mathcal{G}_i$}{
        $\FRec(\mathcal{G}_i, b)$\;
    }
}
\end{algorithm}

\begin{algorithm}[t]
\small
\caption{Minimum degree-based hierarchy construction \cite{TreeDecomposion1991}}
\label{alg:minimum-deg}
\KwIn{Graph $\mathcal{G}=(\mathcal{V},\mathcal{E})$}
\KwOut{Hierarchy tree $\mathcal{H}$}

Initialize an empty hierarchy tree $\mathcal{H}$\;
$\mathcal{G}'\gets \mathcal{G}$;

\While{$\mathcal{V}(\mathcal{G}')\neq \emptyset$}{
select $v\in \mathcal{V}(\mathcal{G}')$ with minimum degree in $\mathcal{G}'$\;
$\mathcal{N}\gets {u\in \mathcal{V}(\mathcal{G}') \mid (u,v)\in \mathcal{E}(\mathcal{G}')}$\;
create a new node $h$ in $\mathcal{H}$ representing ${v}\cup\mathcal{N}$\;
\If{$\mathcal{H}$ is not empty}{
connect $h$ to the node $h^\star\in\mathcal{H}$ that shares the largest overlap with ${v}\cup\mathcal{N}$\;
}
add fill-in edges to make $\mathcal{N}$ a clique in $\mathcal{G}'$\;
remove $v$ and its incident edges from $\mathcal{G}'$\;
}
\Return $\mathcal{H}$;

\end{algorithm}

\stitle{Minimum degree heuristic.} The detail of the minimum degree-based method~\cite{TreeDecomposion1991} is illustrated in Algorithm~\ref{alg:minimum-deg}.
Starting from the input graph $\mathcal{G}$, the algorithm repeatedly removes the vertex with the smallest degree from the working graph $\mathcal{G}'$.
At each step, it forms the set ${v}\cup\mathcal{N}$, where $\mathcal{N}$ is the set of neighbors of $v$, and creates a node $h$ in the hierarchy tree $\mathcal{H}$ to represent this set.
Then $h$ is connected to the existing node in $\mathcal{H}$ that shares the most common vertices with it.
Before removing $v$, the algorithm adds fill-in edges so that $\mathcal{N}$ becomes a clique in $\mathcal{G}'$, making ${v}\cup\mathcal{N}$ a small group of nodes that divides the remaining graph into several parts.
Because each new node is linked to exactly one parent, $\mathcal{H}$ forms a tree. The worst-case time complexity of the minimum degree–based hierarchy construction is $O(n\cdot m)$, but in our experiments its practical running time remains efficient and acceptable for all graphs tested.

From a structural view, the top-down chain in $\mathcal{H}$ records the cut sets created step by step, while each branch growing from a node on this chain corresponds to a subgraph formed when that cut divides the graph.
This process continues inside every subgraph until no vertices are left.
As a result, the minimum degree-based construction produces a hierarchy tree whose vertical chains show the sequence of cut sets, and whose branches represent the subgraphs obtained after each cut, matching the organization of the algorithm.

For example, in Figure~\ref{fig:procedure4all}(a), we first find a minimum vertex cut $\{v_7, v_8\}$. After removing $v_7$ and $v_8$, the graph splits into two disconnected parts. We fix an order on the cut set, say $(v_7, v_8)$, then root of the hierarchy tree $\mathcal{H}$ is $v_7$, and $v_7$ has a single child $v_8$.
Next, we search for a minimum vertex cut in the two parts remaining. The cut sets are $\{v_2\}$ and $\{v_6\}$. We add $v_2$ and $v_6$ as the children of $v_8$ in $\mathcal{H}$.
We repeat this process on every new part and stop when every remaining part is a single vertex. The resulting tree shown in Figure~\ref{fig:procedure4all}(b) is the hierarchy $\mathcal{H}$.  

\begin{algorithm}[t!]
\small
\caption{Index building algorithm of \bindex} 
\label{alg:indexbuilding}
\KwIn{Graph $\mathcal{G}=(\mathcal{V},\mathcal{E})$}
\KwOut{Hierarchy tree $\mathcal{H}$, $(\mathbf{m}_v, f_v)$ for each node $v\in\mathcal{V}$}

\SetKwFunction{FSub}{BuildNode}
\SetKwProg{Fn}{Function}{:}{}
\Fn{\FSub{$v$}}{
    \lIf{$v$ has been built}{\KwRet{$(\mathbf{m}_v, f_v)$}}
    $\mathbf{m}_v \gets \emptyset, f_v \gets 0$\;
    \ForEach{$u \in \desc (v)$}{  
        $(\mathbf{m}_u,f_u) \gets  \FSub (u)$\;  
    }

    \ForEach{$u \in \desc (v)$}{  
        $\mathcal{C} \gets N(v) \cap \desc (u)$\;  
        $\sigma \gets \sum_{x \in \mathcal{C}} \mathbf{m}_u[x]$\; 
        $\mathbf{m}_v \gets \mathbf{m}_v + \frac{\sigma}{f_u} \cdot \mathbf{m}_u$\; 
    }
    $\mathbf{m}_v[v] \gets 1$\;
    $\mathcal{C} \gets N(v) \cap \desc (v)$\; 
    $f_v \gets {d_{v}-\sum_{x \in \mathcal{C}} \mathbf{m}_v[x]}$\;
    
    \KwRet{$(\mathbf{m}_v,f_v)$}\;  
}

\SetKwFunction{FMain}{Main}
\SetKwProg{Fn}{Function}{:}{}
\Fn{\FMain{$\mathcal{G}$}}{
    $\mathcal{H} \gets \buildtree(\mathcal{G})$\;
    $(\mathbf{m}_v,f_v)$ for all $v\in\mathcal{V}$ $\gets \FSub(\mathcal{H}.\rooth)$\;
    \KwRet{Hierarchy tree $\mathcal{H}$, $(\mathbf{m}_v,f_v)$ for all $v\in\mathcal{V}$}\;
}
\end{algorithm}

\stitle{Index Building. }
Once the hierarchy is built, we perform a recursive traversal from the root.  
For each node $v$, if there exist nodes in $\desc (v)$ whose indices have not yet been computed, the algorithm first processes those descendants (Lines~2–4).  
After all indices of $\desc (v)$ are available, the index of $v$ is computed using Equation~\ref{eq:calmf}, obtaining the pair $(\mathbf{m}_v, f_v)$ (Lines~5–11).  
This procedure continues until all nodes have been processed, resulting in the complete index structure for the entire graph. This process correspond to Figure~\ref{fig:procedure4all}(c).
\begin{lemma}[Time Complexity of Index Building]
Let $h$ be the height of the hierarchy tree $\mathcal{H}$.  
The total running time of the index building algorithm (Algorithm~\ref{alg:indexbuilding}) is $O(n\cdot h\cdot (h+ d_{max}))$.
\end{lemma}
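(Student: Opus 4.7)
The plan is to bound the total cost of the memoized recursion \textsc{BuildNode} and then add the cost of \buildtree. Because of the cache check on line~2, the body of \textsc{BuildNode} executes exactly once per node $v\in\mathcal{V}$, so it suffices to bound the per-node work and sum over $v$. For each $v$ the work decomposes into: (i) the recursive calls on descendants (lines~4--5), which contribute nothing under the amortized accounting since each descendant pays for its own body once; (ii) the scalar-sum construction, i.e.\ computing $\mathcal{C}=N(v)\cap\desc(u)$ and $\sigma=\sum_{x\in\mathcal{C}}\mathbf{m}_u[x]$ for every $u\in\desc(v)$; and (iii) the vector updates $\mathbf{m}_v\leftarrow\mathbf{m}_v+(\sigma/f_u)\,\mathbf{m}_u$ together with the final correction involving $N(v)\cap\desc(v)$ and $d_v$.

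For part~(ii), I would swap the order of summation at the fixed node $v$: each neighbor $x\in N(v)$ is examined once per descendant $u$ whose subtree contains $x$, and there are at most $h$ such $u$ (the hierarchy-tree ancestors of $x$ that lie between $x$ and $v$). This yields an $O(d_v\cdot h)$ bound per $v$. For part~(iii), a single update against $\mathbf{m}_u$ costs $O(|\desc(u)|)$, since $\mathbf{m}_u$ is stored compactly on $\desc(u)$ via the \dfs\ indexing; hence the total update cost at $v$ is $\sum_{u\in\desc(v)}|\desc(u)|$.

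Summing over all $v\in\mathcal{V}$, part~(ii) contributes $\sum_v d_v\cdot h=O(m\,h)=O(n\,d_{max}\,h)$, while part~(iii) contributes
\[
\sum_{v}\sum_{u\in\desc(v)}|\desc(u)|
\;=\;\sum_{u}|\desc(u)|\cdot|\anc(u)|
\;\le\;h\sum_{u}|\desc(u)|\;\le\;n\,h^{2},
\]
where the last step uses the double-counting identity $\sum_u|\desc(u)|=\sum_x|\anc(x)|\le n\,h$ that already underlies the space bound. Adding the two contributions gives $O(n\,h\,(h+d_{max}))$. Finally, I would note that \buildtree\ via the recursive minimum-cut heuristic runs in $O(m\,h)$, which is absorbed by the above (the minimum-degree variant has the worse $O(n\,m)$ worst case, which the authors already flag as a heuristic-dependent cost).

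The main obstacle will be pinning down precisely what the loop \textit{ForEach $u\in\desc(v)$} ranges over, and verifying that one vector update really costs $O(|\desc(u)|)$ rather than $O(|\desc(v)|)$ under the compact \dfs-ordered storage of $\mathbf{m}_v$. This amounts to showing that the entries of $\mathbf{m}_u$ inject into the correct positions of $\mathbf{m}_v$ in $O(1)$ per entry, e.g.\ via a precomputed offset between the \dfs\ intervals of $u$ and $v$; once this indexing is pinned down, the two double-sum estimates above close the argument.
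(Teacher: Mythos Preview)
Your proposal is correct and follows essentially the same route as the paper: decompose the per-node work into the neighbor-intersection part and the vector-update part, bound the latter via the double-counting identity $\sum_v\sum_{u\in\desc(v)}|\desc(u)|=\sum_u|\desc(u)|\cdot|\anc(u)|\le nh^2$, and bound the former by $O(n\,h\,d_{max})$. The only minor differences are that (a) for part~(ii) the paper simply charges $\deg(u)+|\desc(u)|$ per pair $(v,u)$ and reindexes, whereas you use the swap-of-summation argument to get $O(d_v\cdot h)$ per $v$---both yield the same $O(n\,h\,d_{max})$---and (b) you explicitly account for the \buildtree\ cost and the \dfs-offset implementation detail, which the paper's proof omits.
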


\begin{proof}
Lines~1--5 ensure that the procedure is invoked once for each vertex $v \in V$, thus contribute only $O(n)$ time. The dominant cost comes from Lines~6--9, whose total work over all nodes can be expressed as $\sum_{v \in V} \sum_{u \in \desc(v)} (\deg(u) + |\desc(u)|)$. For the first term, $\sum_{v \in V} \sum_{u \in \desc(v)} \deg(u)$, by reindexing over descendants we obtain $\sum_{v \in V} \sum_{u \in \desc(v)} \deg(u) = \sum_{u \in V} \deg(u) \cdot |\anc(u)|$, where $\anc(u)$ denotes the set of ancestors of $u$ in the hierarchy. Since the hierarchy tree has height $h$, we have $|\anc(u)| \le h$ for all $u$, and letting $d_{\max} = \max_{u \in V} \deg(u)$, this term is bounded by $O(n h d_{\max})$. For the second term, $\sum_{v \in V} \sum_{u \in \desc(v)} |\desc(u)|$, a similar reindexing yields $\sum_{v \in V} \sum_{u \in \desc(v)} |\desc(u)| = \sum_{w \in V} \sum_{v \in \anc(w)} \sum_{u \in \anc(v)} 1$. Each node has at most $h$ ancestors, so the inner double sum is $O(h^{2})$ for every $w$, and hence this term is bounded by $O(n h^{2})$. Combining the two parts, the overall time complexity is $O(n h d_{\max} + n h^{2})$.
\end{proof}
As discussed earlier, $h$ is typically small relative to $n$, thus the time required to build \bindex\ remains acceptable in practice. For example, on \fullusa, our index is constructed within 4.5 hours.

\subsection{Query processing}

\begin{algorithm}[t]
\small
\caption{Query processing algorithm with \bindex}
\label{alg:bd-query}
\KwIn{Hierarchy tree $\mathcal H$, $(\mathbf{m}_v,f_v)$ for all $v\in\mathcal{V}$, query pair $(s,t)$}
\KwOut{Biharmonic distance $b(s,t)$}
$\tilde{\boldsymbol{\tau}}_s[v],\tilde{\boldsymbol{\tau}}_t[v]\gets0$ for all $v\in\mathcal{V}$\;

\ForEach{$u \in \anc(s)$}{
  \ForEach{$k \in \desc(u)$ }{ 
    $\tilde{\boldsymbol{\tau}}_s[k]\gets \tilde{\boldsymbol{\tau}}_s[k] + \frac{\mathbf{m}_u[s]}{f_u}\cdot \mathbf{m}_u[k]$\; 
  }
}
\ForEach{$u \in \anc(t)$}{
  \ForEach{$k \in \desc(u)$ }{ 
    $\tilde{\boldsymbol{\tau}}_t[k]\gets \tilde{\boldsymbol{\tau}}_t[k] + \frac{\mathbf{m}_u[t]}{f_u}\cdot \mathbf{m}_u[k]$\; 
  }
}

\Return $b(s,t)=\|\tilde{\boldsymbol{\tau}}_{s}-\tilde{\boldsymbol{\tau}}_{t}\|_2^2 - \frac{1}{n}\,\Big(\mathbf 1^{\top}(\tilde{\boldsymbol{\tau}}_{s}-\tilde{\boldsymbol{\tau}}_{t}) \Big)^2$ according to Lemma~\ref{lem:v-absorbed}\;
\end{algorithm}

With the constructed \bindex, we can derive the contribution matrix $\widetilde{\mathbf{M}}_v$ for every node $v \in \mathcal{V}$.  
Based on these matrices, the deterministic random walk distribution $\tilde{\boldsymbol{\tau}}$ for any starting node $s$ can be directly computed.  

During query processing, given a query pair $(s, t)$, we only need to access the stored pairs $(\mathbf{m}, f)$ along the ancestor sets $\anc(s)$ and $\anc(t)$.  
By accumulating the stored information of these ancestors, we obtain the deterministic random walk distributions $\tilde{\boldsymbol{\tau}}_s$ and $\tilde{\boldsymbol{\tau}}_t$ for the two source nodes (corresponding to Lines~2–7 in Algorithm~\ref{alg:bd-query}).  
Finally, using the random walk formulation of the biharmonic distance in Lemma~\ref{lem:v-absorbed}, we compute the value $b(s,t)$ (Line~8). For example, in Figure~\ref{fig:procedure4all}(d), the query for $(v_2,v_5)$ needs access to \bindex on $\anc(v_2)\cup \anc(v_5) = \{v_2,v_5,v_6,v_7,v_8\}$ and does not access $\{v_0,v_1,v_3,v_4\}$.

\begin{lemma}[Time Complexity of Query Processing]
Let $h$ be the height of the hierarchy tree $\mathcal{H}$.  
For a query pair $(s,t)$, the time complexity of computing $b(s,t)$ using the index structure is $O(n\cdot h)$ in the worst case.
\end{lemma}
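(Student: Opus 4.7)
The plan is to account directly for the work done by Algorithm~\ref{alg:bd-query}, splitting it into the two ancestor traversals (Lines~2--7) and the final distance computation (Line~8), and bounding each piece by $O(n\cdot h)$.

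First I would observe that the outer loops in Lines~2 and~5 iterate over $\anc(s)$ and $\anc(t)$, respectively, each of size at most $h$ by definition of the height of $\mathcal{H}$. For each ancestor $u$, the inner loop touches every $k\in\desc(u)$ and performs only constant work per $k$ (one multiplication and one addition using the stored entries $\mathbf{m}_u[s]$, $f_u$, and $\mathbf{m}_u[k]$). Because $\mathbf{m}_u$ is stored compactly under the \dfs\ order of the subtree rooted at $u$, the entries $\mathbf{m}_u[k]$ for $k\in\desc(u)$ occupy a contiguous block and can be scanned in $O(|\desc(u)|)$ time without any translation overhead.

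Next I would bound the total work of Lines~2--4 by
\[
\sum_{u\in\anc(s)} O(|\desc(u)|) \;\le\; O\!\left(h\cdot \max_{u\in\anc(s)} |\desc(u)|\right) \;\le\; O(n\cdot h),
\]
using $|\desc(u)|\le n$, and symmetrically for Lines~5--7. The accumulation into $\tilde{\boldsymbol{\tau}}_s$ and $\tilde{\boldsymbol{\tau}}_t$ writes at most $n$ distinct entries in total, so no additional cost beyond what is already counted is incurred.

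Finally, Line~8 evaluates $\|\tilde{\boldsymbol{\tau}}_s-\tilde{\boldsymbol{\tau}}_t\|_2^2$ and $\mathbf{1}^{\top}(\tilde{\boldsymbol{\tau}}_s-\tilde{\boldsymbol{\tau}}_t)$ by a linear scan over the $n$ coordinates, contributing $O(n)$ time. Summing the three contributions yields the claimed $O(n\cdot h)$ bound. The only subtlety, and the part I would be most careful about, is the indexing convention of the compact vectors $\mathbf{m}_u$: the argument requires that accessing $\mathbf{m}_u[s]$, $\mathbf{m}_u[t]$, and the range $\{\mathbf{m}_u[k]:k\in\desc(u)\}$ all takes constant amortized time per entry, which follows from the \dfs\ ordering used in the index structure.
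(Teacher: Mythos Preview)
Your proof is correct and follows essentially the same approach as the paper: bound the work of the two ancestor traversals by $\sum_{u\in\anc(\cdot)} O(|\desc(u)|)\le O(n\cdot h)$ using $|\anc(\cdot)|\le h$ and $|\desc(u)|\le n$, then observe that Line~8 costs only $O(n)$. Your version is in fact more careful than the paper's, spelling out the constant-time access via the \dfs\ ordering and the final linear scan, but the core argument is identical.
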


\begin{proof}
Algorithm~\ref{alg:bd-query} accesses only $\mathrm{Anc}(s)$ and $\mathrm{Anc}(t)$. For each $u$ on those two chains (each of length $\le h$) it loops over $\mathrm{Desc}(u)$ to update $\boldsymbol{\tau}_s$ or $\boldsymbol{\tau}_t$. In the worst case, the sum of $|\mathrm{Desc}(u)|$ over all $u$ on a chain is $O(n\cdot h)$. Therefore, the time complexity is $O(n\cdot h)$.
\end{proof}

Since each query traverses only a small portion of the hierarchy, most computations are localized to a few relevant subtrees. In practice, the query time is far below the worst-case bound. On \fullusa dataset, our average query time is less than $10^2$s and is almost exact, while other methods require more than $10^5$s. 

\stitle{Discussion: \bindex is an exact method.} Although we represent \bd using random walks, it is important to emphasize that \bindex is numerically exact up to floating-point precision. The random walk probabilities are computed deterministically in our approach, and no random walk sampling is involved at any stage. Consequently, \bindex fundamentally differs from previous projection-based \cite{ACC2018Yi} and random walk-based \cite{LiuKDD2024} approximate methods. The only source of numerical error arises from floating-point precision, which is inherent to all methods for computing \bd, since \bd is a numerical quantity. 
The experiments further demonstrate that \bindex is highly accurate: across all datasets, the relative error consistently remains below $10^{-9}$.

\section{Experiments}
\subsection{Experimental setup}
\label{sec:exp-setup}

\stitle{Datasets, query sets, and ground truth.}
All datasets used are listed in Table~\ref{tab:datasets}, which are publicly available from the \snap~\cite{Leskovec2014SNAP} and \dimacs~\cite{dataset-DIMACS}. For each dataset, we uniformly sample $100$ distinct source--target vertex pairs as the query set. Although \bindex is an exact method, we require ground-truth results to evaluate prior approximate methods for \bd \cite{ACC2018Yi,LiuKDD2024}.
For ground-truth \bd values, we exactly solve the underlying Laplacian linear systems using a Cholesky factorization of $\boldsymbol{L}$, i.e., a direct sparse linear solve~\cite{HornJohnson2013MatrixAnalysis}.

\stitle{Environment.}
All experiments are conducted on a Linux server with Intel Xeon E5\mbox{-}2680~v4 CPU and 512\,GB RAM.
Algorithms are implemented in C++ and compiled with \texttt{g++}~7.5.0. 

\stitle{Different algorithms.}
We compare \bindex with 2 \sota exact methods. Except for Cholesky factorization,  we also evaluate \lapsolver, a Laplacian solver that combines approximate Gaussian elimination with a preconditioned conjugate gradient (PCG) phase~\cite{Cohen2014SolvingSDD, Gao2023RobustLaplacian, KyngSachdeva2016, rchol}, with its accuracy parameter set to $10^{-15}$. We also compare against 5 \sota approximate methods for single-pair \bd query, including (i) \emph{Laplacian solver}-based method. \randomprojection~\cite{ACC2018Yi} uses JL-sketch that solves $O(\log n)$ Laplacian systems in preprocessing and answers each query by computing an inner product. (ii) \emph{Random walk}-based approaches. \stw~\cite{LiuKDD2024} is a sampled walk estimator that draws $r$ pairs of truncated random walks up to length $\ell$ to form an additive-accuracy estimate. \swf~\cite{LiuKDD2024} is a sampling-with-feedback variant that uses empirical-variance tests for early stopping. \pushp~\cite{LiuKDD2024} is a local push method on a truncated expansion with a pair-dependent truncation length (we report \pushp and do not run \push separately). Unless otherwise noted, the default $\epsilon=0.1$ follows the previous study~\cite{LiuKDD2024}. For \bindex, we use recursive minimum cut heuristic \cite{METIS} to build the hierarchy tree $\mathcal{H}$.
\begin{table}[t]
\centering
\caption{Datasets}
\label{tab:datasets}
\scalebox{0.86}{
\begin{tabular}{c c c c c}
\hline
\textbf{Dataset} & \textbf{$n$} & \textbf{$m$} & \textbf{$d_{max}$} & \textbf{Type} \\
\hline
\textsc{Facebook}     & 4,039      & 88,234      & 1,045  & Social \\
\textsc{CAIDA}        & 26,475     & 53,381      & 2,628  & Internet \\
\textsc{Email-Enron}  & 33,696     & 180,811     & 1,383  & Social \\
\textsc{NewYork}      & 264,346    & 365,050     & 8      & Road \\
\textsc{DBLP}         & 317,080    & 1,049,866   & 343    & Collaboration \\
\textsc{Amazon}       & 334,863    & 925,872     & 549    & Co-purchase \\
\textsc{Road-PA}      & 1,090,920  & 1,541,898   & 9      & Road \\
\textsc{Road-TX}      & 1,393,383  & 1,921,660   & 12     & Road \\
\textsc{Road-CA}      & 1,971,281  & 2,766,607   & 12     & Road \\
\textsc{Full-USA}     & 23,947,348 & 28,854,319  & 9      & Road \\
\hline
\end{tabular}}
\end{table}

\subsection{Query processing performance}
\label{sec:query-efficiency}
We evaluate the query processing performance of different algorithms on the same set of $100$ queries, reporting the average \emph{query time} for all methods and \emph{relative error} only for approximate methods. The comparison with exact solvers is reported in Fig.~\ref{fig:time-exact}, while the comparison with approximate methods is shown in Figs.~\ref{fig:time-approx} and~\ref{fig:error-approx}.

\stitle{Comparison with exact methods.}
Figure~\ref{fig:time-exact} reports the query time of \bindex against two exact baselines, \lapsolver and a Cholesky decomposition, across all datasets. \bindex is consistently $2$--$4$ orders of magnitude faster than both exact methods. For example, on the large road network \fullusa, a single \bd query requires about $1\times 10^{6}$ seconds with Cholesky and about $4\times 10^{5}$ seconds with \lapsolver, which is clearly impractical for single-pair queries, whereas our index answers the same query in only $64$ seconds. By constructing \bindex, our approach makes exact \bd queries for large numbers of single pairs on massive networks practically feasible.

Figure~\ref{fig:exact-error} reports the relative error of \bindex and \lapsolver when using the Cholesky factorization as the ground-truth baseline. As shown in the figure, the relative error of \bindex remains below $10^{-9}$ on all datasets. The residual error of our method stems solely from floating-point roundoff: the index is stored in double-precision floating-point format (\texttt{double} in C++), which provides about $15$ significant digits, and the local rounding errors introduced at each step of the hierarchy are propagated and accumulated, resulting in the observed $10^{-9}$-level global error.

\begin{figure}
    \centering
    \includegraphics[width=0.92\linewidth]{./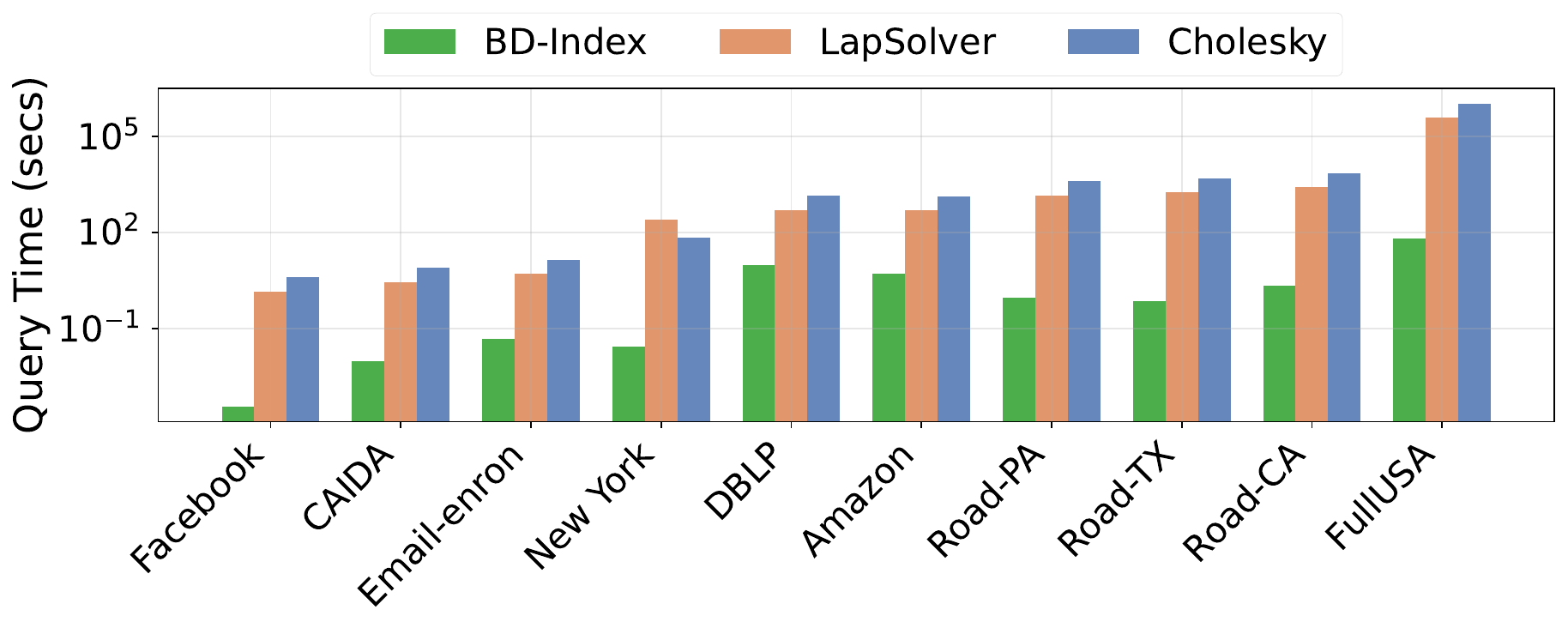}
    \caption{Query time compared with exact methods}
    \label{fig:time-exact}
\end{figure}

\begin{figure}
    \centering
    \includegraphics[width=0.92\linewidth]{./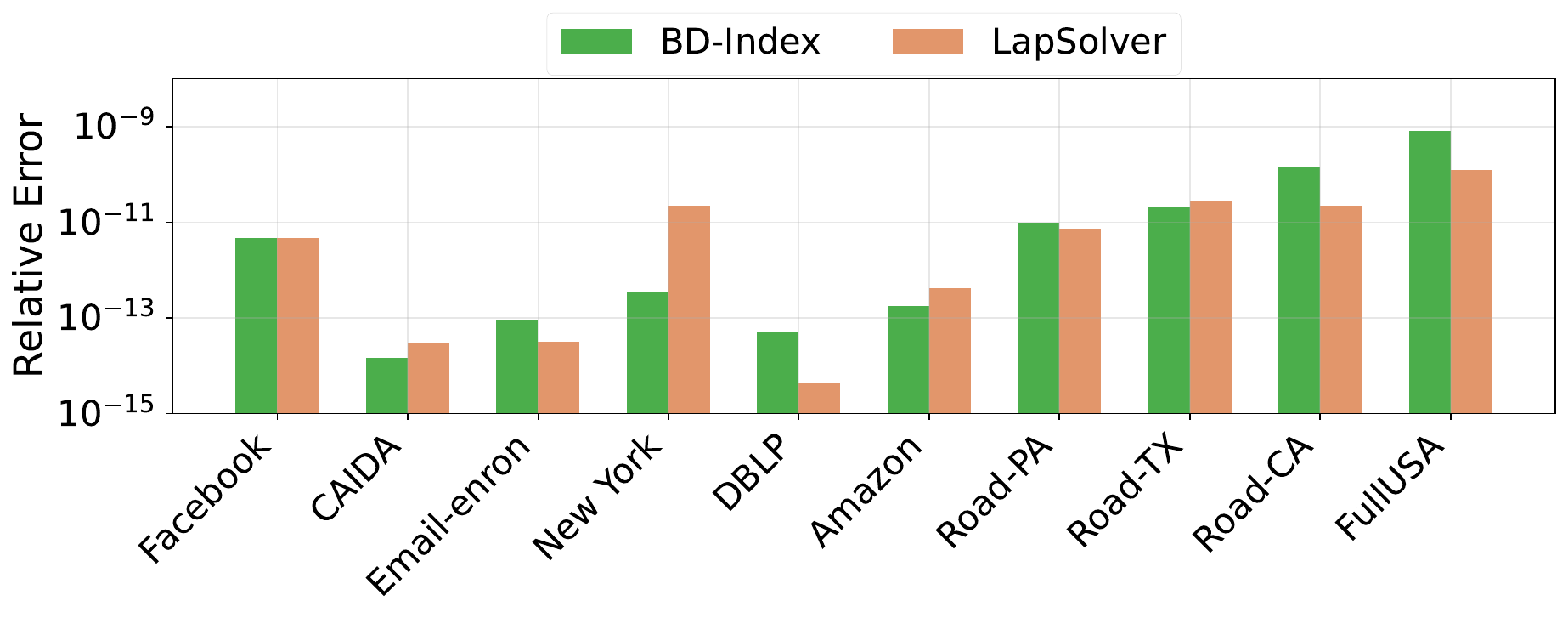}
    \caption{Relative error of \bindex and \lapsolver}
    \label{fig:exact-error}
\end{figure}

\stitle{Comparison with approximate methods.}
As shown in Figs.~\ref{fig:time-approx} and~\ref{fig:error-approx}, once the index has been constructed, \bindex, although computing exact results,  achieves at least an order-of-magnitude speedup over the fastest competing approximate method on all datasets. \randomprojection can only handle small graphs. \stw, \swf, and \pushp are competitive on social graphs but their performance deteriorates sharply on road networks, where \bindex is up to two additional orders of magnitude faster. For example, on the \textsc{Amazon} dataset, our method answers queries in about $6$ seconds with exact accuracy, whereas other methods require at least $10^{2}$ seconds to reach a relative error of $10^{-4}$. Overall, existing approximate methods incur very large errors on road networks and often fail to produce usable results, this is because they all rely on fast random walk mixing—precisely the bottleneck that our divide-and-conquer indexing strategy is easy to overcome.

\begin{figure*}
    \centering
    \includegraphics[width=0.95\linewidth]{./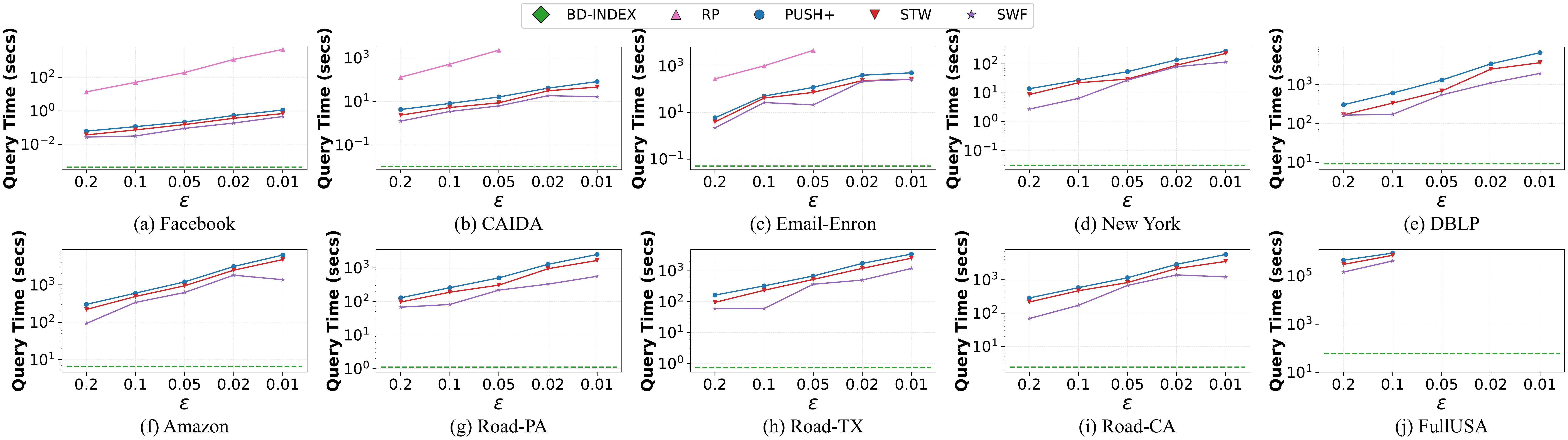}
    \caption{Query time compared with different approximate methods}
    \label{fig:time-approx}
\end{figure*}

\begin{figure*}
    \centering
    \includegraphics[width=0.95\linewidth]{./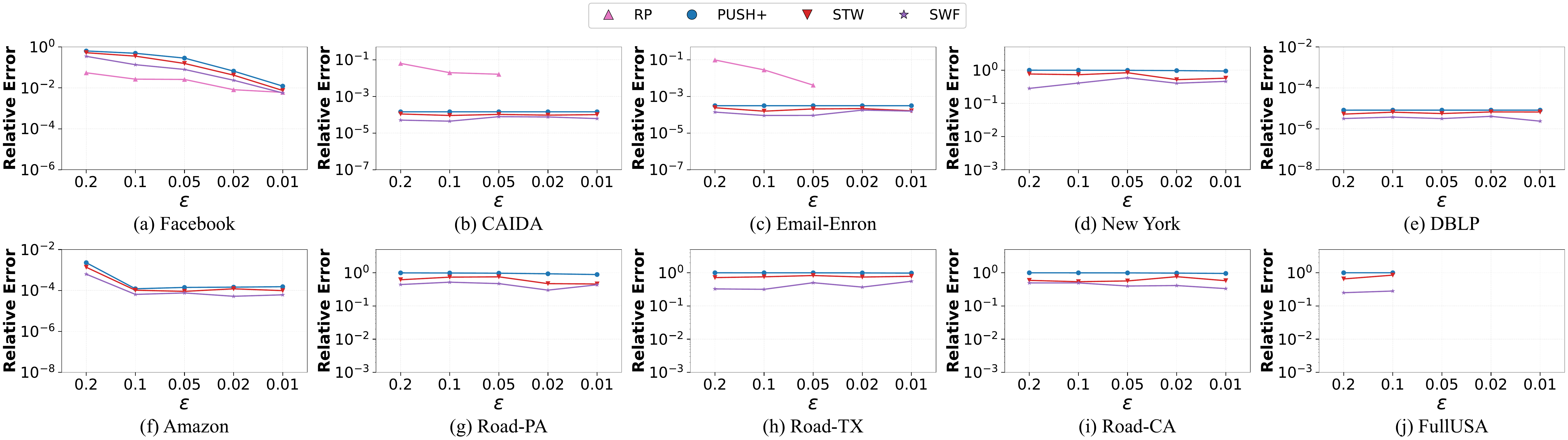}
    \caption{Query relative error of different approximate methods}
    \label{fig:error-approx}
\end{figure*}

\subsection{Index analysis}
The significant improvement of \bindex over existing approaches is due to the sacrifice of index building time and index size. In this experiment, we report the index building time and index size on 10 datasets. The results can be found in Table~\ref{tab:index_info}.

\stitle{Index size.} Recall that the index size can be bounded by $O(n\cdot h)$. As can be seen, the indices are compact when the hierarchy tree height $h$ is modest. As $h$ increases on social networks, the stored vectors become much larger. On large road networks, despite very large $n$, the indices remain within a single-machine budget when $h$ stays moderate. For example, the whole index of \fullusa takes $167,597$ MB with $h=1,622$, because \fullusa\ is a large road network.
For social graphs significantly larger than those in Table~\ref{tab:index_info}, the index size may exceed the available memory on our machine, thus the build cannot complete. In conclusion, our \bindex is highly efficient for large graphs with small $h$ (e.g., road networks). However, for graphs with large $h$, the $O(n\cdot h)$ storage cost remains a limitation, indicating a clear direction for future optimization.

\stitle{Index building time.}
The index building time follows the same pattern. Most road networks with moderate tree hierarchy $h$ complete index construction in minutes.
When $h$ is large, social/information graphs incur substantial offline time. For example, \textsf{Amazon} takes $78{,}957$ seconds and \textsf{DBLP} takes \(368{,}690\) seconds. However, \textsf{Road-PA} takes only $198$ seconds even though this graph is an order of magnitude larger than \textsf{Amazon} and \textsf{DBLP}. These results are consistent the \(n\cdot h\cdot (h+d_{max})\) dependence for the current implementation.

\bindex achieves theoretically exact answers and enables single-pair queries that are one to three orders of magnitude faster than the fastest approximate method, and two to four orders of magnitude faster than existing exact solvers, at the cost of an offline index whose memory footprint scales with $n$ and $h$. In practice, it is efficient on very large graphs with small $h$ (e.g., road networks) and on medium-scale graphs with large $h$ (e.g., social networks).

\begin{table}[t]
\centering
\caption{\textbf{Index Performance on all datasets}}
\label{tab:index_info}
\scalebox{0.82}{
\begin{tabular}{ccccc}
\toprule
\multirow{2}{*}{\textbf{Datasets}} & \textbf{Graph Size} & \textbf{Hierarchy} & \textbf{Index Size} & \textbf{Construction} \\
                          & (MB)       &        \textbf{tree height} $h$            & (MB)       & \textbf{Time} (secs) \\
\midrule
\textsc{Facebook}    & 0.8   & 401   & 5      & 0.55   \\
\textsc{CAIDA}   & 0.5  & 265  & 37  & 44  \\
\textsc{Email-enron} & 1.8   & 2,455  & 299    & 1,549   \\
\textsc{NewYork}     & 4.6   & 295   & 352    & 2.34   \\
\textsc{DBLP}        & 13    & 18,466 & 33,470  & 368,690 \\
\textsc{Amazon}      & 12    & 11,805 & 21,358  & 78,957  \\
\textsc{Road-PA}     & 21    & 817   & 5,496   & 198    \\
\textsc{Road-TX}     & 26    & 607   & 4,318   & 77     \\
\textsc{Road-CA}   & 40  & 857  & 9,318  & 253  \\
\textsc{FullUSA}     & 470   & 1,622  & 167,597 & 16,023  \\
\bottomrule
\end{tabular}}
\end{table}

\subsection{Comparison of different tree hierarchy $\mathcal{H}$}
In this experiment, we compare different tree hierarchy $\mathcal{H}$ to cut the graph into pieces. Since both the time and space complexity of our algorithm depend on the tree height $h$, the structure of the tree hierarchy $\mathcal{H}$ has a direct impact on the overall efficiency. In Section~\ref{sec:subsecib}, we introduced two heuristic strategies for constructing $\mathcal{H}$: \emph{minimum cut-based} hierarchy construction and \emph{minimum degree-based} hierarchy construction. Figures~\ref{fig:separator-query}, \ref{fig:separator-build}, and~\ref{fig:separator-size} report the query time, index construction time, and index size, respectively, under these two hierarchy constructions across all datasets. Across the board, the minimum cut-based hierarchy yields shorter query time, lower construction time, and a smaller index on most datasets.

Table~\ref{tab:hier-nnz} further compares the structural properties of the resulting hierarchies in terms of the tree height $h$ and average label size $s$, which is the average number of non-zero elements stored in our index. The minimum cut-based strategy typically produces hierarchies with smaller tree height $h$ and lower average label size $s$. This structurally more compact hierarchy directly translates into the superior query and indexing performance observed in our experiments.

\begin{figure}
    \centering
    \includegraphics[width=0.92\linewidth]{./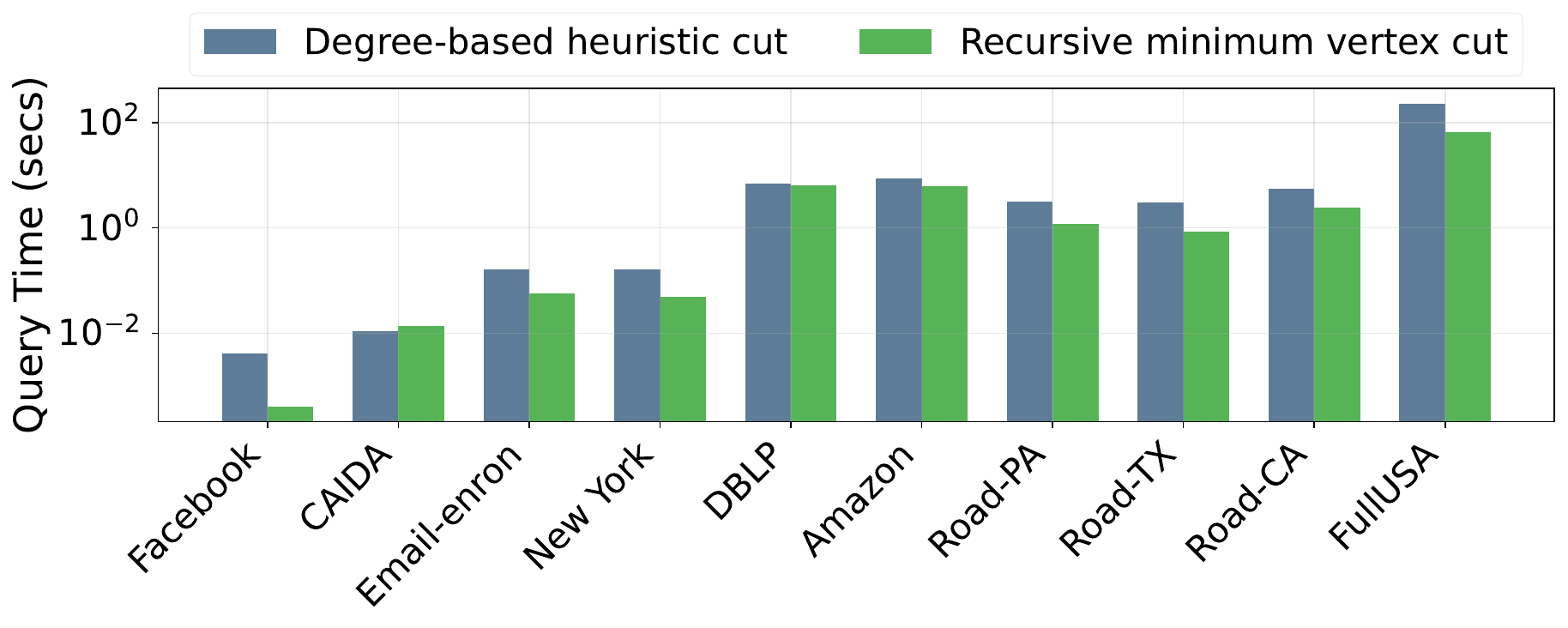}
    \caption{Query time with different tree hierarchy $\mathcal{H}$}
    \label{fig:separator-query}
\end{figure}
\begin{figure}
    \centering
    \includegraphics[width=0.92\linewidth]{./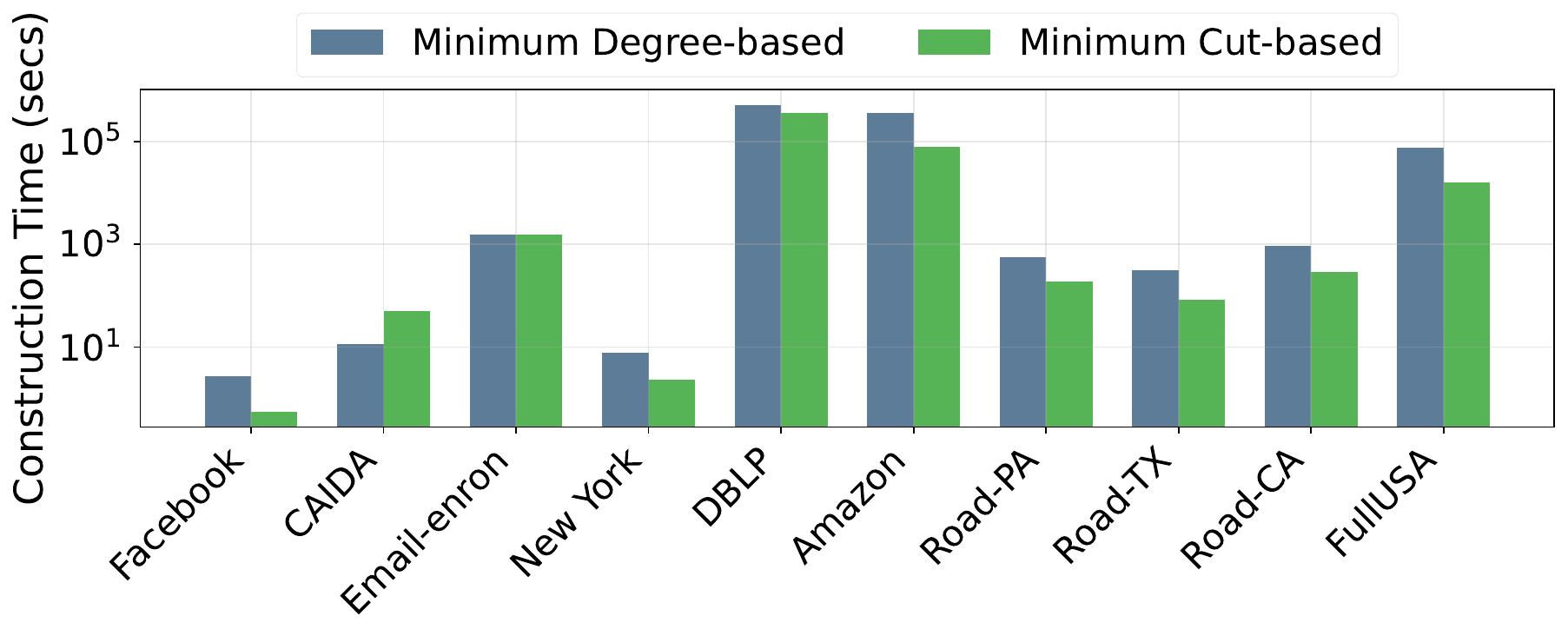}
    \caption{Index building time with different tree hierarchy $\mathcal{H}$}
    \label{fig:separator-build}
\end{figure}
\begin{figure}
    \centering
    \includegraphics[width=0.92\linewidth]{./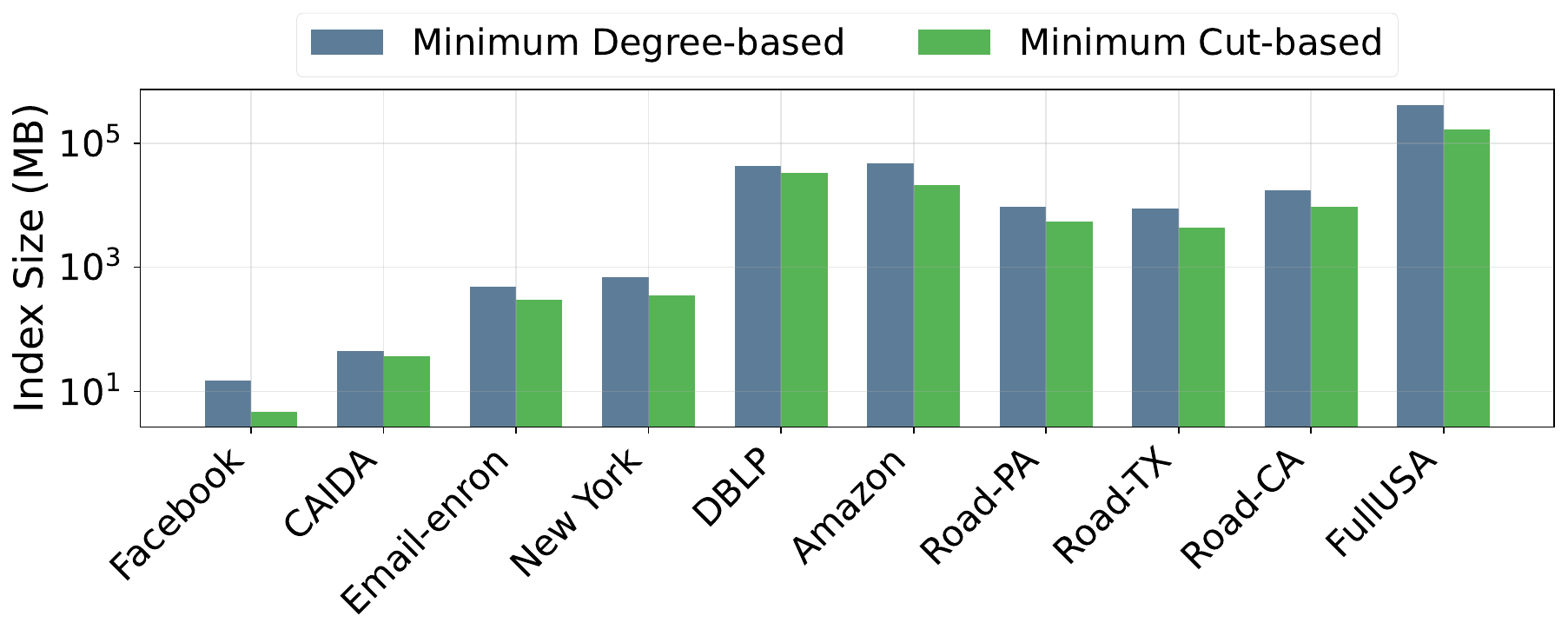}
    \caption{Index size with different tree hierarchy $\mathcal{H}$}
    \label{fig:separator-size}
\end{figure}

\begin{table}[t]
\centering
\caption{Comparison of hierarchy tree height $h$ and average label size $s$ under different tree hierarchy $\mathcal{H}$
}
\label{tab:hier-nnz}
\scalebox{0.82}{%
\begin{tabular}{ccccc}
\hline
\multirow{2}{*}{\textbf{Dataset}} &
\multicolumn{2}{c}{\textbf{Hierarchy tree height $h$}} &
\multicolumn{2}{c}{\textbf{Average label size $s$}} \\
[0.3ex]
\cline{2-5}
\noalign{\vskip 0.4ex}
& \shortstack{Minimum\\cut-based}
& \shortstack{Minimum\\degree-based}
& \shortstack{Minimum\\cut-based}
& \shortstack{Minimum\\degree-based} \\
\hline
Facebook    & 401    & 741    & 154    & 486    \\
CAIDA       & 265    & 289    & 181    & 222    \\
Email-Enron & 2,455  & 2,397  & 1,166  & 1,895  \\
New York    & 295    & 767    & 174    & 346    \\
DBLP        & 18,466 & 20,109 & 13,836 & 17,582 \\
Amazon      & 11,805 & 21,394 & 8,360  & 18,770 \\
Road-PA     & 817    & 2,034  & 662    & 1,146  \\
Road-TX     & 607    & 1,530  & 419    & 867    \\
Road-CA     & 857    & 1,715  & 624    & 1,167  \\
FullUSA     & 1,622  & 3,976  & 917    & 2,268  \\
\hline
\end{tabular}%
}
\end{table}

\section{Case studies}
In this section, we conduct two case studies to demonstrate the effectiveness of \bd in two graph mining tasks.
\subsection{Critical link identification on road networks}
\label{sec:app-bd-roads}

We apply \bd to identify \emph{critical links} in a road network~\cite{IJCAI2018Yi}. 
For each edge $(s,t)$, $b(s,t)$ acts as an \emph{edge centrality} measure: larger $b(s,t)$ indicate links that are more critical for preserving network connectivity. We evaluate this interpretation on a $3{\times}3$\,km road network of Philadelphia extracted from OpenStreetMap, which contains real urban infrastructure such as highways, local streets, and bridges. We keep the largest connected component and treat bridges and river-crossing segments as weak ground-truth critical links. We compare \bd\ against three alternative edge centralities: \er\ (effective resistance using inverse road length), \eb\ (edge betweenness with road length as impedance), and \pr\ (an edge score derived from node PageRank computed on inverse length). In Fig.~\ref{fig:color}, edges are visualized from blue (low centrality) to red (high centrality). \bd\ highlights all bridge segments while remaining low on internal roads; \er\ also highlights bridges but assigns similarly high scores to many adjacent streets, showing limited contrast; and both \eb\ and \pr\ fail to consistently detect all bridges. Overall, \bd\ most cleanly isolates the visually critical links in the network.

To further quantify this effect, we iteratively remove top-ranked edges under each centrality until the total removed length reaches $5\%$ or $10\%$ of the network, and measure the impact on connectivity using three metrics: \textsf{LCC}, the fraction of nodes in the largest connected component; \textsf{\#Comp}, the number of connected components; and \textsf{Reach}, the fraction of reachable origin–destination pairs among $10^3$ randomly sampled pairs. Lower values of \textsf{LCC} and \textsf{Reach}, together with higher \textsf{\#Comp}, indicate stronger disruption. As reported in Table~\ref{tab:conn-length}, under both removal budgets, \bd\ yields the lowest \textsf{LCC} and \textsf{Reach} and the highest \textsf{\#Comp}, confirming that it focuses on true structural bottlenecks. \pr\ performs second best, while \er\ and \eb\ have the weakest impact on network connectivity. These results demonstrate the effectiveness of \bd\ for identifying critical links in real road networks.

\begin{figure}
    \centering
    \includegraphics[width=0.85\linewidth]{./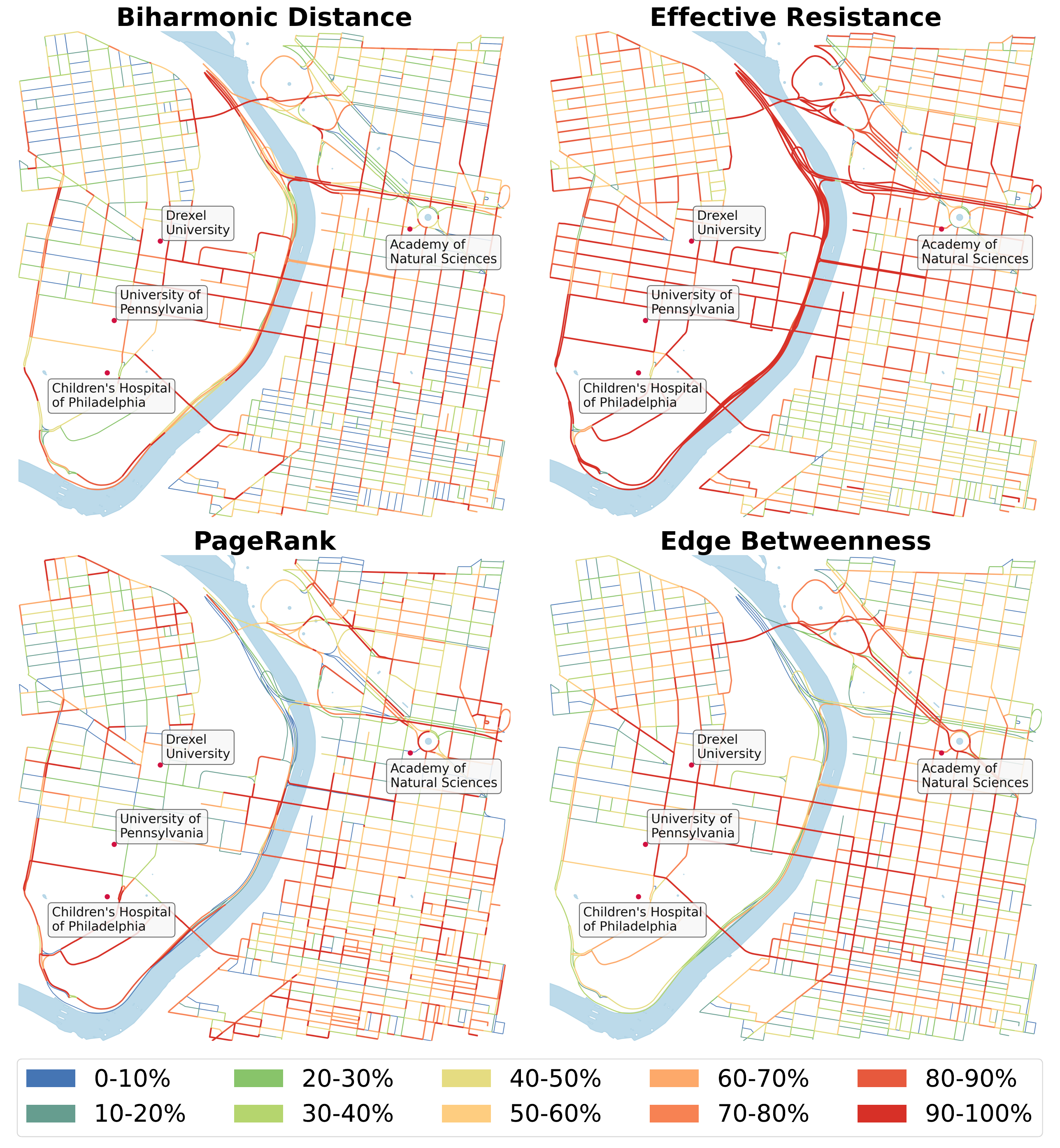}
    \caption{Comparison of different edge centralities on the Philadelphia network, edges are marked with colors from blue (low centrality value) to red (high centrality value).}
    \label{fig:color}
\end{figure}

\begin{table}[t]
\centering
\caption{Comparison of network connectivity (LCC, \#Comp, Reach) after removing the top 5\% / 10\% of road length ranked by different edge centrality measures}
\label{tab:conn-length}
\small
\scalebox{0.9}{
\begin{tabular}{cccccc}
\toprule
\textbf{Budget} & \textbf{Method} & \textbf{LCC} & \textbf{\#Comp} & \textbf{Reach} \\
\midrule
\multirow{4}{*}{5\%}
& \bd      & \textbf{0.675} & 12 & \textbf{0.528} \\
& \er      & 0.994 & 6  & 0.991 \\
& \eb      & 0.993 & 4  & 0.983 \\
& \pr      & 0.988 & 12 & 0.975 \\
\midrule
\multirow{4}{*}{10\%}
& \bd      & \textbf{0.664} & 29 & \textbf{0.481} \\
& \er      & 0.986 & 14 & 0.981 \\
& \eb      & 0.975 & 12 & 0.944 \\
& \pr      & 0.971 & 27 & 0.940 \\
\bottomrule
\end{tabular}}
\end{table}

\subsection{Over-squashing mitigation in \gnn}
\label{sec:app-bd-rewire}
Graph Neural Networks (\gnn{}s) often suffer from \emph{over-squashing}~\cite{ICLR2022Jake,ICML2023Black,AKANSHA2025130389}, where information from many distant nodes is forced through a few narrow connections, weakening message passing and degrading accuracy. Prior work~\cite{ICML2023Black} has shown that \bd\ can be used to identify and alleviate such bottlenecks via graph rewiring, thereby mitigating over-squashing and improving \gnn{} performance. Here, we perform a simple replication-style study on the \textsc{Chameleon}~\cite{chameleon2021} dataset (a standard benchmark constructed from Wikipedia pages on diverse topics), applying the same \bd-guided rewiring strategy as in~\cite{ICML2023Black}. Our goal is not to design a new \gnn{} architecture, but to validate the role of \bd\ in reducing over-squashing.

A common theoretical indicator of over-squashing is the graph’s \emph{total effective resistance} (TotalER): higher TotalER implies more constrained information flow. We therefore use TotalER as a measure for the severity of over-squashing. Starting from the original graph (Baseline), we iteratively add a small number of new edges in up to three rounds. In each round, we select node pairs with the largest \bd\ values from a pool of long-range pairs and add $1\%$ new edges. After each step, we record both TotalER and test accuracy.

As shown in Table~\ref{tab:bd-rewire-chameleon}, adding only $1\%$ of \bd-guided edges already reduces TotalER by about $3.9\%$ while increasing accuracy by roughly $0.35$ percentage points. Increasing the addition to $2$–$3\%$ of \bd-guided edges further decreases TotalER by up to $4.5\%$ and yields accuracy gains of around $0.70$–$0.75$ percentage points overall. The absolute improvement is modest because our setup deliberately uses a very simple two-layer \gnn{} with basic training (rather than sophisticated architectures or heavy hyperparameter tuning), and \gnn{} design is not the main focus of this paper. Nevertheless, these results demonstrate that \bd\ provides an effective signal for rewiring to mitigate over-squashing: selecting a small fraction of additional edges according to \bd\ already relieves communication bottlenecks (lower TotalER) and improves predictive performance.

\begin{table}[t] 
\centering
\caption{\bd-guided over-squashing mitigation in \gnn{}s on \textsc{Chameleon}~\cite{chameleon2021} node classification}
\label{tab:bd-rewire-chameleon}
\small
\scalebox{0.85}{
\begin{tabular}{ccccccc}
    \toprule
    \textbf{Graph} & \textbf{\#Edges} & \textbf{\begin{tabular}[c]{@{}c@{}}$\Delta$\#Edges\\ (\%)\end{tabular} } &
    \textbf{\begin{tabular}[c]{@{}c@{}}TotalER\\ ($\times 10^6$)\end{tabular}} & \textbf{\begin{tabular}[c]{@{}c@{}}$\Delta$TotalER \\ (\%)\end{tabular}} &
    \textbf{Acc.} & \textbf{$\Delta$Acc (\%)} \\
    \midrule
    Baseline 
      & 23370 & +0.0 & 7.52 & {+}0.00 & 62.28 & {+}0.00 \\
    \midrule
    \multirow{3}{*}{\begin{tabular}[c]{@{}c@{}}\bd-guided\\rewired\end{tabular}}
      & 23603 & +1.0 & 7.23 & {-}3.93 & 62.50 & {+}0.35 \\
      & 23837 & +2.0 & 7.22 & {-}4.06 & 62.72 & {+}0.70 \\
      & 24071 & +3.0 & 7.18 & {-}4.47 & 62.75 & {+}0.75 \\
    
\bottomrule
\end{tabular}}
\end{table}

\section{Related work} \label{sec:related}

Several works have examined the mathematical properties of the Laplacian pseudoinverse~\cite{DoyleSnell1984RandomWalks,Klein1993ResistanceDistance,Chen2007ResistanceNormalizedLaplacian,YangKlein2013RecursionResistance,Bapat2014GraphsMatrices,HornJohnson2013MatrixAnalysis}.
Early studies revealed its close connection to the \emph{resistance distance} (\er)~\cite{Klein1993ResistanceDistance}
and the theory of random walks on graphs~\cite{DoyleSnell1984RandomWalks,Levin2017MarkovChains}.
Later analyses explored its spectral interpretation,
linking resistance measures to Laplacian eigenvalues and network coherence~\cite{Chen2007ResistanceNormalizedLaplacian,YangKlein2013RecursionResistance,Tyloo2018RobustnessSynchrony,Tyloo2019KeyPlayerResistance,Bamieh2012CoherenceLargeScale}.
Quantities derived from the Laplacian pseudoinverse, including the effective resistance (\er), personalized PageRank (\pr), and biharmonic distance (\bd),
are widely used in graph learning, network analysis, and spectral algorithms~\cite{Lu2011LinkPredictionSurvey,Tsitsulin2020SLaQ,Black2024BiharmonicHigherOrder,Shimada2016GraphDistanceComplex,Tran2019SignedDistanceRecommender,Yang2022BipartiteSimilaritySearch}.
\er\ measures connectivity and robustness; its total value reflects how strongly a network is connected,
and has been used for robustness optimization, sparsification, and resistance-based embeddings~\cite{spielman2008sparsification,predari2023greedy,Liao2024,Liao-ER-Index,ICML2023Black,LiSachdeva2023EffectiveResistance}.
\pr\ quantifies the influence of a node relative to a source and underlies many ranking, recommendation, and diffusion methods~\cite{mahoney2012local,shur2023pagerank_embedding,xie2015edgeppr,Haveliwala2003SecondEigenvalueGoogle}.
\bd\ has been applied in clustering, centrality, and geometric processing,
with recent studies demonstrating its advantages over \er~\cite{Black2024BiharmonicHigherOrder,LiuKDD2024,IJCAI2018Yi,ACC2018Yi,TIT2022Yi,TOG2010Lipman,Jin2019ForestDistanceCloseness}.
Physical and dynamical systems perspectives linked these quantities
to synchronization and control in large networks~\cite{Tyloo2018RobustnessSynchrony,Tyloo2019KeyPlayerResistance}.
These studies focus on applications, whereas our work emphasizes computation.

Many recent studies employ random walk–based methods to compute quantities derived from the Laplacian pseudoinverse~\cite{Levin2017MarkovChains,FOCS2018Bressan,NeurIPS2013Lee,ICDM2006Tong,NeurIPS2015Banerjee}.
The computation of \er\ can be interpreted as counting the expected number of times a random walk starting at one node visits another~\cite{KDD2021Peng,Liao2024,Liao-ER-Index},
while \pr\ corresponds to a random walk with probabilities~\cite{Haveliwala2003SecondEigenvalueGoogle,KDD2017Wang,Gleich2015SIREV}.
These works have developed efficient random walk algorithms for such quantities.
Beyond purely probabilistic estimators, spectral and randomized projection techniques
such as the Johnson–Lindenstrauss transform~\cite{Achlioptas2003DatabaseFriendlyRand,MotwaniRaghavan1995RandomizedAlgorithms}
and nearly-linear Laplacian solvers~\cite{Cohen2014SolvingSDD,SpielmanTeng2014NearlyLinearSDD,Gao2023RobustLaplacian}
have also been adopted.
However, such methods cannot be directly applied to the computation of \bd,
since \bd\ measures the distance between the distributions of two random walks.
This form makes the problem substantially more challenging than \er\ or \pr\ computations.

Another related direction studies index-based shortest-path distance querying.  Beyond classical online algorithms such as Dijkstra, bidirectional search, and A* search~\cite{dechampeaux1977bidirectional,goldberg2005computing}, many methods precompute hopsets and distance labelings so that, after adding a sparse set of shortcuts, all pairs are reachable within a bounded number of hops while (approximate) distances are preserved~\cite{cohen1994polylog,gavoille2001distance,elkin2016hopsets}.  This leads to distance oracles and 2-hop labels with strong guarantees, including bounds on general graphs and on graphs with small treewidth, highway dimension, or skeleton dimension~\cite{cohen2002reachability,farzan2011compact,li2020scaling,abraham2010highway,kosowski2017beyond,alstrup2016sublinear,gupta2019exploiting,cohenaddad2017fast}.  Theoretical advances have been instantiated in practical systems such as TEDI~\cite{wei2010tedi}, pruned landmark and independent-set–based labelings~\cite{akiba2013fast,fu2013islabel}, hop-doubling and hierarchical 2-hop schemes~\cite{chang2012exact,jiang2014hop,ouyang2018when}, projected vertex separators and hierarchical cut labellings~\cite{chen2021p2h,farhan2023hierarchical,farhan2025dual}, as well as dynamic and learning-based indexes~\cite{ouyang2020efficient,zhang2022relative,koehler2025stable,zheng2023reinforcement}.  Our goal is also to build an index for fast pairwise distance queries, but \bd\ is a Laplacian-based distance between random-walk distributions.  It does not satisfy the cut and bounded-hop properties that these indexes exploit, so hopset and labeling techniques cannot be reused; we instead design new hierarchical decompositions and labels tailored to encoding biharmonic distances via the Laplacian pseudoinverse.

\section{Conclusion}
In this work, we revisited the biharmonic distance (\bd) from an algorithmic perspective and addressed the challenge of efficient single-pair \bd\ queries on large graphs. We introduced \bindex, a divide-and-conquer index structure that leverages small graph separators to decompose \bd\ computation into independent local problems. The index can be built in $O(n\cdot h\cdot (h+ d_{max}))$ time, requires $O(n\cdot h)$ space, and answers each query in $O(n\cdot h)$ time, where $h$ is the height of the hierarchy tree---typically much smaller than $n$. Extensive experiments on large-scale datasets demonstrate that \bindex\ achieves order-of-magnitude speedups over state-of-the-art approximate methods while achieving theoretically exact accuracy. Beyond efficiency, we also showcased the utility of \bd\ in downstream applications, including identifying critical links in road networks and mitigating the over-squashing problem in graph neural networks.

\balance
\bibliographystyle{ACM-Reference-Format}
\bibliography{CholWilson}

\end{document}